\renewcommand\footnotetextcopyrightpermission[1]{}
\def\BibTeX{{\rm B\kern-.05em{\sc i\kern-.025em b}\kern-.08emT\kern-.1667em\lower.7ex\hbox{E}\kern-.125emX}}
\newtheorem{theorem}{Theorem}[section]
\newtheorem{lemma}[theorem]{Lemma}
\newtheorem{conjecture}[theorem]{Conjecture}
\newtheorem{corollary}[theorem]{Corollary}
\theoremstyle{definition}  \newtheorem{definition}[theorem]{Definition}
\newbox\mybox 
\newdimen\myboxwidth    
\newcommand\addpicture[3]{% 
\setbox\mybox=\hbox{\includegraphics[scale=#3]{#2}}
\myboxwidth\wd\mybox    
\renewcommand\windowpagestuff{% 
\includegraphics[scale=#3]{#2}
\captionof{figure}{A test figure.}}
\parpic[#1]{% 
\begin{minipage}{\myboxwidth}
 \windowpagestuff 
\end{minipage} 
} }
\newcommand{\ignore}[1]{}
\newcommand\Tstrut {\rule{0pt}{3ex}}         % = `top' strut
\newcommand\Bstrut {\rule[-1.3ex]{0pt}{0pt}}   % = `bottom' strut
\newcommand{\mypar}[1]{\medskip\noindent{\sffamily\bfseries #1.}~}
\newcommand{\etal}{{et al.}\xspace}
\newcommand{\R}{\mathbb R}
\newcommand{\eps}{\varepsilon}
\newcommand{\poly}{\mathrm{poly}}
\newcommand{\EX}{\hbox{\bf E}}
\newcommand{\Var}{\hbox{\bf Var}}
\newcommand{\EE}{\mathbb{E}}
\newcommand{\DISJ}{\textsc{disj}\xspace}
\newcommand{\Tau}{\mathcal{T}}
\DeclareMathOperator{\heavy}{heavy}
\newcommand{\nt}{T}
\newcommand{\degen}{\kappa}
\newcommand{\tO}{\widetilde{O}}
\newcommand{\temax}{J}			% max num triangles containing an edge
\newcommand{\dmax}{\Delta}		% max degree in a graph
\newcommand{\vertexcover}{C}
\newcommand{\pathcount}{P_2}
\newcommand{\assign}{\tau}
\newcommand{\total}{\Tau}
\DeclareMathOperator{\argmin}{argmin}
\DeclareMathOperator{\fixed}{fixed}
\newcommand{\otilde}{\widetilde{O}}
\newcommand{\disj}{\textsc{disj}\xspace}
\begin{document}

\fancyhead{}
%%
%% The "title" command has an optional parameter,
%% allowing the author to define a "short title" to be used in page headers.
\title{How the Degeneracy Helps for Triangle Counting in Graph Streams}

%%
%% The "author" command and its associated commands are used to define
%% the authors and their affiliations.
%% Of note is the shared affiliation of the first two authors, and the
%% "authornote" and "authornotemark" commands
%% used to denote shared contribution to the research.
\author{Suman K. Bera}
\email{sbera@ucsc.edu}
\affiliation{%
  \institution{UC Santa Cruz}
  \streetaddress{1156 High St}
  \city{Santa Cruz}
  \state{California}
  \postcode{95064}
}
%\orcid{1234-5678-9012}
\author{C. Seshadhri}
\email{sesh@ucsc.edu}
\affiliation{%
  \institution{UC Santa Cruz}
  \streetaddress{1156 High St}
  \city{Santa Cruz}
  \state{California}
  \postcode{95064}
}

%%
%% By default, the full list of authors will be used in the page
%% headers. Often, this list is too long, and will overlap
%% other information printed in the page headers. This command allows
%% the author to define a more concise list
%% of authors' names for this purpose.
% \renewcommand{\shortauthors}{Trovato and Tobin, et al.}

%%
%% The abstract is a short summary of the work to be presented in the
%% article.
\begin{abstract}
    We revisit the well-studied problem of triangle count estimation in graph streams. Given a graph represented as a stream of $m$ edges, our aim is to compute a $(1\pm\eps)$-approximation to the triangle count $T$, using a small space algorithm. For arbitrary order and a constant number of passes, the space complexity is known to be essentially $\Theta(\min(m^{3/2}/T, m/\sqrt{T}))$ (McGregor et al., PODS 2016, Bera et al., STACS 2017).
    
    We give a (constant pass, arbitrary order) streaming algorithm that can circumvent this lower bound for \emph{low degeneracy graphs}. The degeneracy, $\degen$, is a nuanced measure of density, and the class of constant degeneracy graphs is immensely rich (containing planar graphs, minor-closed families, and preferential attachment graphs). We design a streaming algorithm with space complexity $\widetilde{O}(m\degen/T)$. For constant degeneracy graphs, this bound is $\widetilde{O}(m/T)$, which is significantly smaller than both $m^{3/2}/T$ and $m/\sqrt{T}$. We complement our algorithmic result with a nearly matching lower bound of $\Omega(m\degen/T)$.

    % We give a constant pass algorithm that needs at most $O(m/T)$ space for bounded degeneracy graph, where $m$ in the number of edges and $T$ is a promised lower bound on the number of triangles. Our results improve the space complexity by a factor of a $O(\sqrt{m})$ --- for general graphs the multi-pass complexity of this problem in $\Theta(m^{1.5}/T)$ (McGregor et al., PODS 2016, Bera et al., STACS 2017). We also prove matching lower bounds for the class of bounded degeneracy graph.
\end{abstract}
%%
%% The code below is generated by the tool at http://dl.acm.org/ccs.cfm.
%% Please copy and paste the code instead of the example below.
%%
\begin{CCSXML}
<ccs2012>
<concept>
<concept_id>10003752.10003809.10010055</concept_id>
<concept_desc>Theory of computation~Streaming, sublinear and near linear time algorithms</concept_desc>
<concept_significance>500</concept_significance>
</concept>
<concept>
<concept_id>10003752.10003809.10003635</concept_id>
<concept_desc>Theory of computation~Graph algorithms analysis</concept_desc>
<concept_significance>100</concept_significance>
</concept>
</ccs2012>
\end{CCSXML}

\ccsdesc[500]{Theory of computation~Streaming, sublinear and near linear time algorithms}
\ccsdesc[100]{Theory of computation~Graph algorithms analysis}
%%
%% Keywords. The author(s) should pick words that accurately describe
%% the work being presented. Separate the keywords with commas.
\keywords{Triangle counting, Streaming Model, Degeneracy}

%% A "teaser" image appears between the author and affiliation
%% information and the body of the document, and typically spans the
%% page.
% \begin{teaserfigure}
%   \includegraphics[width=\textwidth]{sampleteaser}
%   \caption{Seattle Mariners at Spring Training, 2010.}
%   \Description{Enjoying the baseball game from the third-base
%   seats. Ichiro Suzuki preparing to bat.}
%   \label{fig:teaser}
% \end{teaserfigure}

%%
%% This command processes the author and affiliation and title
%% information and builds the first part of the formatted document.
\maketitle

\section{Introduction} \label{sec:intro}

Triangle counting is a fundamental algorithmic problem for graph streams. 
Indeed, the literature on this one problem is so rich, that its study is almost
a subfield in of itself. Since the introduction of this problem by Bar-Yossef et al~\cite{BarYossefKS02}, there has been two decades of research on streaming algorithms
for triangle counting~\cite{BarYossefKS02,Jowhari2005,Buriol2006,tsourakakis2009doulion,Manjunath2011,tsourakakis2011triangle,Kane2012,Pavan2013,Pagh2012,Braverman2013,garcia2014,cormode2014,McGregor2016,bera2017,kallaugher2017hybrid}. The significance of triangle counting
is underscored by the wide variety of fields where it is studied: database theory,
theoretical computer science, and data mining. 
From a practical standpoint,
triangle counting is a core analysis task in network science. Given the scale
of real-world graphs, this task is considered to be computationally intensive. In database systems, triangle counting is used for query size estimation in database join
problems (see~\cite{atserias2008size,assadi2018simple} for details).
These have led to the theoretical and practical study of triangle counting in a variety of computational models: distributed shared-memory, MapReduce, and streaming~\cite{ChNi85,ScWa05,ScWa05-2,tsourakakis2008fast,avron2010counting,kolountzakis2012efficient,chu2011triangle,SuVa11,arifuzzaman2013patric,SePiKo13,TaPaTi13}.

Despite the plethora of previous work in the streaming setting, the following
question has not received much attention. \emph{Are there ``natural" graph classes
that admit more efficient streaming algorithms for triangle counting?} This question
has a compelling practical motivation. It is well known from network science
that massive real-world graphs exhibit special properties. Could graph classes that 
contain such real-world graphs have ``better than worst-case" streaming triangle algorithms?

Motivated by these considerations, we study the problem of streaming triangle counting,
parametrized by the \emph{graph degeneracy} (also called the maximum core number). We defer the formal definition for later, but for now, it suffices to think of degeneracy as a nuanced measure of graph sparsity. The class of constant degeneracy
graphs is extremely rich: it contains all planar graphs, all minor-closed families of graphs, and preferential attachment graphs. The degeneracy of real-world graphs is well studied, under the concept of \emph{core decompositions}. It is widely observed that the degeneracy of real-world graphs is quite small, many orders of magnitude smaller than worst-case upper bounds~\cite{goel2006bounded,JS17,shin2018patterns,DBS18}.

In computational models other than streaming, the degeneracy is known to be relevant for triangle counting. From the perspective of running time of exact sequential algorithms,
a seminal combinatorial algorithm of Chiba-Nishizeki gives an $O(m\degen)$ time algorithm for exact triangle counting ($m$ is the number of edges, and $\degen$ is the degeneracy)~\cite{ChNi85}. Thus, for (say) constant degeneracy, this algorithm beats the best known running time bounds of more sophisticated matrix multiplication based algorithm (of course, the latter work for all graphs)~\cite{alon1997finding}. In distributed and query-based computational models, a number
of results have shown that low degeneracy is helpful in bounding communication or query complexities~\cite{SuVa11,FFF15,JS17,ERS20}.
This inspires the main question addressed by this paper.

\medskip

    \emph{Do there exist streaming algorithms for approximate triangle counting on low degeneracy graphs that can beat known worst-case lower bounds?}

\subsection{Our results and significance} \label{sec:result}

We focus on constant pass streaming algorithms, with arbitrary order. Thus, we think of the input graph $G = (V,E)$ represented as an arbitrary list of (unrepeated) edges. Our algorithm is allowed to make a constant number of passes over this list, but has limited storage. As is standard, we use $n$ for the number of vertices, $m$ for the number of edges, and $T$ for the number of triangles in $G$.

We first define the graph degeneracy. 
\begin{definition} \label{def:degen} The degeneracy of a graph $G$,
denoted $\degen(G)$, is defined as $\max_{\textrm{$G'$ subgraph of $G$}}\{\textrm{min degree of $G'$}\}$. In words, it is the largest possible minimum degree of a subgraph of $G$.
\end{definition}

In a low degeneracy graph, all induced subgraphs have low degree vertices.
The following procedure that computes the degeneracy is helpful for intuition. Suppose one iteratively removed the minimum degree vertex from $G$ (updating degrees after every removal). For any vertex $v$, consider the ``observed" degree \emph{at the time of removal}. One can prove that the degeneracy is the largest such degree~\cite{degen-wiki}. Thus, even though $G$ could have a large maximum degree, the degeneracy can be small if high degree vertices are typically connected to low degree vertices. 

Our main theorem follows.

\begin{theorem} 
\label{thm:main} Consider a graph $G$ of degeneracy at most $\degen$,
that is input as an arbitrary edge stream. There is a streaming algorithm that 
outputs a $(1\pm\eps)$-approximation to $T$, with high probability\footnote{We use ``high probability" to denote errors less than $1/3$.}, and has the following properties.
It makes constant number of passes over the input stream and uses space $(m\degen/T)\cdot \poly(\log n, \eps^{-1})$.
\end{theorem}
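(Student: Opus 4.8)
My plan is to turn the $(1\pm\eps)$-estimation of $T$ into estimating the success probability of a single Bernoulli trial, where the trial is designed so that its sample space has size $O(m\degen)$. This is the only place the degeneracy is used, and it is what replaces the $\Theta(m^{3/2})$-size sample spaces behind the $m^{3/2}/T$ and $m/\sqrt T$ bounds.

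The combinatorial backbone is the classical Chiba--Nishizeki inequality: if every edge is oriented towards its higher-degree endpoint (ties broken by id), then $N := \sum_{\{u,v\}\in E}\min(d_u,d_v)=O(m\degen)$~\cite{ChNi85}. I call a pair of edges $(\{a,b\},\{a,w\})$ sharing a vertex $a$ a \emph{probe} when $a$ is the lower-degree endpoint of $\{a,b\}$ and $w\neq b$, and a \emph{hit} when the closing edge $\{b,w\}$ is present, i.e.\ $\{a,b,w\}$ is a triangle. There are exactly $N=O(m\degen)$ probes, and since each triangle is closed exactly once from each of its three edges, there are exactly $3T$ hits. Hence a uniformly random probe is a hit with probability $3T/N$, and $\tfrac N3$ times the empirical hit-rate of $s$ independent probes is an unbiased estimator of $T$. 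As each trial is $\mathrm{Bernoulli}(3T/N)$, a multiplicative Chernoff bound gives that $s=O\!\big(\tfrac{N}{\eps^{2}T}\log n\big)=\tO(m\degen/(\eps^{2}T))$ probes yield a $(1\pm\eps)$-approximation with high probability; this is exactly the target complexity.

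It remains to draw these probes from an edge stream in $O(1)$ passes and $\tO(s)=\tO(m\degen/T)$ space. The key point is that a probe touches only three vertices, so I can maintain the $\tO(s)$ probes in parallel, each as a constant amount of state. One pass samples candidate centers and arm-edges by reservoir-sampling random incidences; a second pass recovers the \emph{exact} degrees of only the $\tO(s)$ sampled endpoints (by counting their incidences) and selects the base edges; a final pass tests the closing edges. Crucially, degrees are needed only for sampled vertices, so the full degree sequence is never stored. The unknowns $T$, $\degen$, and the scaling $N$ are handled by the standard geometric search over $O(\log n)$ scales together with a constant-factor estimate of $N$ from the same samples, costing only extra $\poly(\log n,\eps^{-1})$ factors.

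The hard part — and where I expect the real work to be — is sampling probes close enough to uniform in small space. This is not automatic: the tempting scheme of picking a uniform edge and a single uniform neighbor of its lower-degree endpoint is also unbiased, but its importance weight is $m\,d_a$, which explodes on instances like the \emph{book graph} (one edge in $\Theta(T)$ triangles), forcing $\Theta(m)$ samples even though $m\degen/T=O(1)$ there; indeed that scheme has variance proportional to the uncontrolled quantity $m\sum_e\min(d_u,d_v)\,t_e$. Uniform sampling cures this because every probe then carries the same weight $N/3$, so I must select the base edge with probability proportional to $\min(d_u,d_v)$ under the degree orientation without materializing degrees. The orientation constraint itself is cheap to impose by rejection: since each edge has a unique lower-degree endpoint, $\sum_a d^{+}(a)=m$ and a uniform neighbor lands in $N^{+}(a)$ with probability $d^{+}(a)/d_a$, so a natural accept test succeeds with constant probability. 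The genuine difficulty is the degree-proportional \emph{weighting} of base edges; I would attack it with a multi-level reservoir that couples an incidence sample (supplying the factor $d_a$) with the orientation filter, and then prove that the resulting distribution is within a constant factor of uniform on hit-probes, so that all importance weights stay $O(N)$ and the clean variance bound above is preserved. Establishing this bounded-weight property is, I expect, the technical heart of the proof.
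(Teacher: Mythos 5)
Your ideal estimator is essentially the paper's warm-up (Section 4): a uniform random ``probe'' (an edge together with a uniform neighbor of its lower-degree endpoint) hits a triangle with probability $\Theta(T/N)$ where $N=\sum_e \min(d_u,d_v)=O(m\degen)$ by Chiba--Nishizeki, so $\tO(m\degen/(\eps^2T))$ independent uniform probes suffice. That part is correct. The genuine gap is in the streaming implementation, and you have both left it open and, I think, aimed at the wrong target. With the tools you describe (reservoir-sample candidate edges/incidences, then learn the degrees of only the sampled endpoints in a second pass), there is no way to produce $s$ \emph{independent} probes each of which is uniform: a single uniformly sampled edge, even with its degree known afterwards, is still a uniform edge, and reweighting it by $d_e$ gives exactly the high-variance scheme you reject; sampling the center proportional to $d_a$ via a uniform incidence underweights probes at high-degree centers by a factor $d_a$; and rejection from uniform wedges has acceptance rate $N/\sum_a d_a(d_a-1)$, which can be polynomially small. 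The only viable route (and the paper's) is to draw a \emph{pool} $R$ of $\tO(m\degen/T)$ uniform edges, learn $d_e$ for $e\in R$, and then draw all $\ell$ probes from $R$ proportionally to $d_e$. But then the probes share the randomness of $R$, and the empirical hit rate concentrates around $t_R/d_R$ rather than around $T/d_E$, where $t_R=\sum_{e\in R}t_e$ counts triangles \emph{incident} to $R$. On your own book-graph example this fails: one edge has $t_e=\Theta(T)$, so $t_R$ does not concentrate for any $|R|=o(m)$, even though every individual probe has the correct marginal distribution. Your proposed ``bounded-weight'' property (probe distribution within a constant factor of uniform) does not address this, because the failure is caused by correlation across probes, not by skewed importance weights.

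The missing idea is the paper's \emph{assignment rule}: each triangle is assigned to (essentially) its contained edge with the fewest triangles, some triangles are deliberately left unassigned, and one proves using the degeneracy that (i) at most an $O(\eps)$ fraction of triangles go unassigned and (ii) every edge is assigned at most $O(\degen/\eps)$ triangles. The algorithm then estimates the number of triangles \emph{assigned} to $R$ rather than incident to $R$; property (ii) bounds the variance of $\assign_R=\sum_{e\in R}\assign_e$ so that $|R|=\tO(m\degen/T)$ suffices, which is exactly what fixes the book graph. Implementing the assignment test in a stream is itself nontrivial --- when a probe finds a triangle one must estimate $t_{e'}$ for its edges by sampling their neighborhoods, declare edges with $d_{e'}$ too large ``costly'' and triangles with all edges heavy ``unassigned,'' and show these exceptions cost only $O(\eps T)$ triangles --- and this consumes additional passes and the $\tO(m\degen/T)$ space budget per call. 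None of this appears in your proposal, so as written the argument does not go through.
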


To understand the significance of the bound $m\degen/T$, note that space
complexity of streaming triangle counting is known to be $\min(m^{3/2}/T, m/\sqrt{T})$~\cite{bera2017,McGregor2016}. Consider $\degen = O(1)$, which as mentioned earlier, holds for 
all graphs in minor-closed families and preferential attachment graphs. In this case,
the algorithm of Theorem~\ref{thm:main} uses space $\otilde(m/T)$. This is significantly smaller than both $m^{3/2}/T$ and $m/\sqrt{T}$. We note that for all graphs, $\degen \leq \sqrt{2m}$, and thus, the space is always $\widetilde{O}(m^{3/2}/T)$.

As an illustrative example, consider the wheel graph with $n$ vertices (take a cycle with $n-1$ vertices, and add a central vertex connected to all other vertices). Note that $m = T = \Theta(n)$ and $\kappa = O(1)$ ($G$ is planar). The space bound given in Theorem~\ref{thm:main} is only polylogarithmic, while \emph{all existing} streaming algorithms bounds (given in~\Cref{table:prior}) are $\Omega(\sqrt{n})$.
% \footnote{We remark here that the claim is based on the 
% theoretical bounds reported by existing algorithms as given in~\Cref{table:prior}.}.

Our bound of $\tO(m\degen/\nt)$ subsumes the term 
$\tO(m^{3/2}/\nt)$, and dominates the term $\tO(m/\sqrt{\nt})$ when
$\nt = \Omega(\degen^2)$. For real-world graphs, $\nt = \Omega(\degen^2)$ is a naturally occurring phenomenon. In fact,
real-world large graphs are often characterized by following
two properties: (1) low sparsity, and (2) high triangle density~\cite{watts1998collective,sala2010measurement,seshadhri2012community,durak2012degree}. Thus, from a practical standpoint, our bound offers significant improvement over previously known bounds.

We complement Theorem~\ref{thm:main} with a nearly matching lower bound.

\begin{theorem} \label{thm:lb} Any constant pass randomized streaming algorithm for graphs with $m$ edges, $T$ triangles, and degeneracy at most $\kappa$, that provides a constant factor approximation to $T$
with probability at least $2/3$, requires storage $\Omega(m\degen/T)$.
\end{theorem}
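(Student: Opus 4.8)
The plan is to prove the lower bound $\Omega(m\degen/T)$ via a communication complexity reduction, following the standard template used to establish the existing $\Omega(m/\sqrt{T})$ and $\Omega(m^{3/2}/T)$ bounds. I would reduce from \DISJ, the two-party set disjointness problem on universe size $N$, whose randomized communication complexity is $\Omega(N)$, even for constant-pass protocols (each stream pass costs communication proportional to the stored state). The overall strategy is: design a ``gadget'' graph construction in which Alice's and Bob's inputs encode the two sets, such that the presence of a single intersecting element creates a controlled number of triangles while a disjoint instance creates a different controlled number, so that any constant-factor approximation to $T$ distinguishes the two cases. A streaming algorithm using space $s$ then yields a protocol using $O(s)$ communication per pass, forcing $s = \Omega(N)$, which we tune to equal $\Omega(m\degen/T)$.

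The main design challenge is calibrating the gadget so that the three target parameters $m$, $\degen$, and $T$ simultaneously take prescribed values (up to constants), while the ``yes'' and ``no'' cases of \DISJ differ in the triangle count by a constant factor. Concretely, I would take a universe of size $N = \Theta(m\degen/T)$ and build $T/\degen$ independent copies of a bipartite-like gadget, each copy contributing $\degen$ triangles when its associated coordinate is a ``hit'' and contributing essentially none otherwise (or vice versa). Each gadget should be engineered so that adding the intersecting edges boosts its local triangle count by a multiplicative constant; a natural choice is to attach each coordinate to a small clique-like or book-like structure whose degeneracy is exactly $\Theta(\degen)$, so that $\degen(G) = \Theta(\degen)$ globally (degeneracy being the max over subgraphs, it suffices that no subgraph is denser than the gadget). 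Summing over the $T/\degen$ gadgets gives total triangle count $\Theta(T)$ and total edge count $\Theta(m)$, with the disjointness answer shifting $T$ by a constant factor.

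I would verify the three budget equations: the number of edges should be $\Theta(m)$, the number of triangles $\Theta(T)$, and the degeneracy $\Theta(\degen)$, and check that these are mutually consistent for the full admissible range of parameters (in particular that $N = m\degen/T$ is a valid, nonnegative universe size, which requires $T \le m\degen$, a constraint that holds since triangle counts are bounded in terms of edges and degeneracy). The reduction then partitions the stream: Alice writes the edges determined by her set, Bob writes those determined by his, and a constant-pass, space-$s$ algorithm is simulated with the players exchanging the $O(s)$-bit memory state at the boundary, a constant number of times. Distinguishing the two triangle counts solves \DISJ, so $s \cdot (\text{passes}) = \Omega(N) = \Omega(m\degen/T)$, and for constant passes this gives the claimed space lower bound.

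The hard part will be the gadget calibration: making a single construction hit all three parameters $(m, \degen, T)$ independently and tightly is delicate, because increasing triangle density tends to inflate degeneracy, and one must ensure no unintended dense subgraph raises $\degen(G)$ beyond $\Theta(\degen)$. I expect this to require a careful product/union construction where the triangle-bearing structures are deliberately kept ``low-degeneracy'' (e.g.\ each intersecting element only creates triangles through a shared low-degree hub), decoupling triangle count from degeneracy. A secondary subtlety is confirming the information-theoretic lower bound for \DISJ survives the constant number of rounds induced by multiple passes, which follows from the standard round-robin communication argument but should be stated explicitly.
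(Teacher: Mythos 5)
Your high-level frame is the right one and matches the paper: reduce from two-party set disjointness on a universe of size $N=\Theta(m\degen/T)$, let Alice and Bob each own a portion of the edge stream, simulate the space-$s$ constant-pass algorithm by exchanging the $O(s)$-bit state at each boundary, and conclude $s=\Omega(N)$. The gap is in the gadget, which you yourself flag as the hard part, and the design you sketch would not work. Under the standard \DISJ promise (intersection empty or a single element), the single intersecting coordinate must by itself account for a constant fraction of all $T$ triangles; otherwise a constant-factor approximation to the triangle count cannot distinguish the two cases. Your plan of ``$T/\degen$ independent copies, each contributing $\degen$ triangles when its coordinate is hit'' only opens a gap of $\degen$ triangles between the yes and no instances (there is at most one hit), and the arithmetic is also inconsistent: you want universe size $N=\Theta(m\degen/T)$ but describe $T/\degen$ per-coordinate gadgets, and these quantities coincide only when $T=\sqrt{m}\,\degen$. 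Reading ``summing over the gadgets gives $\Theta(T)$ triangles'' as relying on many coordinates being hit would require a gap version of disjointness, which does not carry the $\Omega(N)$ bound.

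The paper's construction supplies exactly the missing amplification. There is a \emph{shared} complete bipartite core $A\times B$ with $|A|=|B|=p=\degen$ (present in the fixed part of the stream), plus $N$ blocks $V_1,\dots,V_N$ of $q=T/\degen^2$ vertices each; Alice joins $V_i$ to all of $A$ when $x_i=1$, Bob joins $V_i$ to all of $B$ when $y_i=1$. Every triangle must take the form $\{v,a,b\}$ with $v\in V_i$, so the graph is triangle-free iff the sets are disjoint, and a single intersecting index closes $p^2q=T$ triangles at once through the shared core. Degeneracy stays at $\Theta(p)=\Theta(\degen)$ in both cases (certified by the vertex ordering that eliminates the blocks first, then $A$, then $B$), $m=\Theta(Npq)$, and $\Omega(N)=\Omega(m/(pq))=\Omega(m\degen/T)$. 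So the decoupling of triangle count from degeneracy that you anticipate needing is achieved not by independent low-degeneracy gadgets but by routing all $T$ triangles through one $\degen\times\degen$ bipartite hub with many potential apexes; this also forces the (implicit) regime $T\ge\degen^2$, i.e.\ $q\ge 1$, which your feasibility check ($T\le m\degen$) does not capture.
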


We remark that all our results in this paper can be
equivalently stated  in terms of {\em arboricity} as
well. The {\em arboricity}
of a graph $G$, denoted as $\alpha$, is the smallest integer $p$ such that the edge set $E(G)$ can be partitioned into $p$ forests. It is asymptotically same as {\em degeneracy}: for every graph $\alpha \le \degen \le 2\alpha-1$. 

\subsection{Main ideas} \label{sec:ideas}

We give a high-level description of our algorithm and proof. The final algorithm
has a number of moving parts, and is based on recent advances in sublinear algorithms
for clique counting~\cite{ELRS,ERS18,ERS20}.

The starting point for our algorithm (and indeed, most triangle counting results related to degeneracy) is the classic sequential procedure of Chiba-Nishizeki. For every edge $e= (u,v)$, the size of intersection on neighborhoods of $u$ and $v$ is the number of triangles containing $e$. This intersection can be determined easily
in $\min(d_u,d_v)$ operations, by searching for elements of the smaller neighborhood in the larger one. (Here, $d_u$ denotes the degree of vertex $u$.) For convenience,
let us define the degree of edge $e$ to be $d_e := \min(d_u,d_v)$. Thus, we can enumerate all triangles in $\sum_{e}d_e$ time. The classic bound of Chiba-Nishzeki asserts that $\sum_{e} d_e$ $= O(m\degen)$.

As a warmup, let us get an $O(m\degen/T)$ space streaming algorithm, that uses a degree oracle. Define $d_E := \sum_e d_e = O(m\degen)$. With the degree oracle, in a single pass, we can sample an edge $e$ proportional to its degree. In the second pass, pick a uniform random neighbor $w$ of the lower degree endpoint of $e$. In the third pass, determine if $e$ and $w$ form a triangle. The probability of finding a triangle is exactly $3T/d_E$. By sampling $O(d_E/T) = O(m\degen/T)$ independent random edges in the first pass, we can estimate $T$ with $O(m\degen/T)$ space.

The main challenge is in removing the degree oracle. As a first step, can we effectively simulate sampling edges proportional to their degree? We borrow a key idea from recent sublinear algorithms for clique counting. First, we sample a set of uniform random edges, denoted $R$. In a second pass, we compute the degree of all edges in $R$. Now, we can run the algorithm described earlier, except we only sample edges of $R$. 
Observe that in the latter sample, taking expectations over $R$, we do sample edges proportional to their degree from the overall graph. Unfortunately, these samples are all correlated by the choice of $R$. How large should $R$ be to ensure that this simulation leads to the right answer?

An alternate viewpoint is to observe that the above approach can give an accurate estimate to the number of triangles incident to $R$, denoted $t_R$. We require $R$ to be large enough, so that $t_R$ can be used to estimate $T$. Let $t_e$ be the number of triangles incident to $e$. The $\{t_e\}$ values can exhibit large variance, even when $\degen = O(1)$. Consider a graph formed by $(n-2)$ triangles that all share a common edge. The graph is planar, so $\degen = O(1)$. But one edge is incident to $(n-2)$ triangles, and all other edges are incident to a single triangle. Thus, the $\{t_e\}$ values have the largest possible variance, and one cannot estimate $T$ by computing $\sum_{e \in R} t_e$ for a small $R$. Note that, for this example graph, our desired streaming algorithm uses only polylogarithmic space ($T = \Theta(m)$, $\degen = O(1)$).

Another key idea from sublinear clique counting saves the day: assignment rules. The idea is to assign triangles uniquely to edges, so that the distribution of assigned triangles has low variance. The overall algorithm will estimate the number of triangles \emph{assigned} to $R$, and not count the number of triangles \emph{incident} to $R$. A natural, though seemingly circular, rule is to assign each triangle to the contained edge that itself participates in the fewest triangles. Using properties of graph degeneracy, it is shown in~\cite{ERS20} that the maximum number of assigned triangles to any edge is $O(\degen)$. (Technically, this is not true. We have to leave some triangles unassigned.)

This leads to another technical complication. In the overall algorithm, when a triangle incident to an edge is discovered, the algorithm needs to determine if the triangle is actually assigned to the edge. This requires estimating $t_e$ for all edges $e$ in the triangle, a potentially space intensive operation. To perform this estimation in $O(m\degen/T)$ requires subtle modifications to the assignment procedure. It turns out we can ignore triangles containing edges of high degree, and thus, the above $t_e$ estimation is only required for low degree edges. Furthermore, we only need to determine if $t_e = \Omega(\degen)$,
which allows for smaller storage algorithms.

All in all, by choosing  parameters carefully, all steps can be implemented using $(m\degen/T)\poly(\log n,\eps^{-1})$ storage.
\section{Related Work}
\label{sec:related}

\begin{table*}[!ht]
\centering % used for centering table

\newlength{\myleftcolwd}\settowidth{\myleftcolwd}{Triangle counting}
\begin{tabular}{|c|c|c|}
\hline
{\bf Space}  & {\bf Remarks} & {\bf Source} \\
\hline
% \multirow{18}{\myleftcolwd}{Triangle counting ($\TRIC$)}
% & 
\Tstrut $\tO\big({mn}/{\nt}\big)^2$ & one pass & \cite{BarYossefKS02}  \\ 
 $\tO\big({m \dmax^2}/{\nt}\big)$ & one pass, $\dmax =$ maximum degree & \cite{Jowhari2005}     \\ 
 $\tO\big({m n}/{\nt}\big)$ & one pass, $n$ known a priori & \cite{Buriol2006}     \\ 
 $\tO\big({m^3}/{\nt^2}\big)$ & one pass, dynamic stream  & \cite{Kane2012}     \\ 
 $\tO\big({m \dmax}/{\nt}\big)$ & one pass, $\dmax =$ maximum degree & \cite{Pavan2013}     \\ 
%  \Bstrut $\tO\big({m \tangle}/{\nt}\big)$ & $\tangle =$ tangle coefficient (Definition~\ref{def:Tangle}) & \cite{Pavan2013}     \\ 
  \Bstrut $\tO\big({m \temax}/{\nt} +  m /{\sqrt{\nt}} \big)$ & one pass,  $\temax =$ maximum triangles  incident on a edge & \cite{Pagh2012}     \\ 
 \Bstrut $\vertexcover + \tO\big( \pathcount /{\nt}\big)$ & one pass,  $\vertexcover =$ vertex cover, $\pathcount =$ \# of $2$-paths & \cite{garcia2014}  \\  
 \Bstrut $\tO\big( m/\sqrt{T}\big)$ & dependence on $\eps$ is $1/\eps^{2.5}$ & \cite{cormode2014}  \\  
 \Bstrut $\tO \big( m^{3/2} /{\nt}\big)$ & multi-pass & \cite{McGregor2016,bera2017}     \\ 
 \Bstrut $\tO \big( m /{\sqrt{\nt}}\big)$ & multi-pass & \cite{McGregor2016}     \\  \cline{1-3}
 \Tstrut $\Omega\big( n^2\big)$ &  one pass, $\nt=1$ & \cite{BarYossefKS02}     \\ 
 $\Omega\big( n/\nt\big)$ & multi-pass, $\nt<n$  & \cite{Jowhari2005}     \\ 
 $\Omega\big( m\big)$ & one pass, $m\in [c_1 n, c_2 n^2],~\nt<n$ & \cite{Braverman2013}     \\ 
 $\Omega\big( m/\nt\big)$ & multi-pass & \cite{Braverman2013}     \\ 
 \Bstrut $\Omega\big( m^3/\nt^2\big)$ & one pass, optimal &    \cite{Kutzkov2014}  \\ 
 \Bstrut $\Omega\big( m / \nt^{2/3} \big)$ & multi-pass &    \cite{cormode2014}  \\ 
 \Bstrut $\Omega\big( m / \sqrt{\nt}\big)$ & multi-pass, for $m= \Theta(n\sqrt{T})$  &    \cite{cormode2014}  \\
 \Bstrut $\Omega\big( \min \{m / \sqrt{\nt} , m^{3/2}/{\nt} \}\big)$ & multi-pass   &    \cite{bera2017}  \\
\hline
\end{tabular}
\caption{Prior work on the triangle counting problem}

\label{table:prior} % is used to refer this table in the text
\end{table*}

The triangle counting problem, a special case of more general subgraph counting problem, has been studied extensively in the 
streaming setting. We present a summary of the significant prior 
works in~\Cref{table:prior}. The upper bounds stated in the table
are for randomized streaming algorithms that provide $(1\pm \eps)$-approximation to the true triangle count with probability at least $2/3$. The $\tO$ notion hides polynomial dependencies on $1/\eps$ and $\log n$. The lower bounds are primarily based on the triangle detection
problem --- detect whether the input graph is triangle free or it contains at least $T$ many triangles. All the results presented
in the table are for the arbitrary order stream.

Jha \etal~\cite{Jha2013} designed a one pass 
$\tO(m/ \sqrt{\nt})$-space algorithm with $\pm W$-additive error 
approximation, where $W$ is the number of two length paths (also called wedges). Braverman\etal~\cite{Braverman2013}
gave a two-pass $\tO(m/\nt^{1/3})$-space algorithm to detect
if the input graph is triangle free or it has at least $T$
many triangles. These results are not directly comparable to our work.

The triangle counting problem has been studied in the context of the
adjacency list streaming model as well. This model is
also known as the vertex arrival model: all the edges incident 
on a vertex arrive together. McGregor~\etal~\cite{McGregor2016} gave one-pass $\tO(m/\sqrt{\nt})$-space and two pass $\tO(m^{3/2}/{\nt})$-space algorithm for the triangle counting problem
in this model. We refer to~\cite{McGregor2016} for other
related work in this model.

Bounded degeneracy graph family is an important class of 
graphs from a practical point of view. Many real-world large
graphs, specially from the domain of social networks and web graphs, often exhibit low degeneracy(~\cite{goel2006bounded,JS17,shin2018patterns,DBS18,bera2019graph}, also Table 2 in~\cite{bera2019graph}). Naturally, designing algorithms 
that are parameterized by {\em degeneracy} has been a theme
of many works in the streaming settings; some examples include
matching size estimation~\cite{assadi2017estimating,esfandiari2018streaming,cormode2017sparse}, independent
set size approximation~\cite{cormode2017independent}, graph coloring~\cite{bera2019graph}. In the general RAM model, 
the relation between degeneracy and subgraph counting
problems has been explored in~\cite{Chiba1985,eppstein1994arboricity,bera2020}.

In the graph query model, where the goal is to design 
sub-linear time algorithms, Eden~\etal~\cite{eden2018faster}
studied the triangle counting problem, and more generally the clique counting
problem in {\em bounded degeneracy} graphs. Although the model is significantly different from the streaming model, we port some 
key ideas from there; see~\Cref{sec:ideas} for a detailed discussion.
The relevance of {\em bounded degeneracy} has been further 
explored in the context of estimating degree moments~\cite{eden2017sublinear} in this model.

%Add more practical examples here.

\section{Notations and Preliminaries}

For an integer $k$, we denote the set $\{1,2,\ldots,k\}$ by $[k]$. 
Throughout the paper, we denote the input graph
as $G=(V,E)$. We assume $G$ has $n$ vertices, $m$ edges 
and $\nt$ many triangles. We denote the degree of a vertex $v\in V$ by $d_v$ and its neighborhood by $N(v)$. For an edge $e=\{u,v\}$, we define its neighborhood $N(e)$ to be that of the lower degree end point: $N(e) = N(u)$ if $d_u < d_v$; $N(e) = N(v)$ otherwise.
Similarly, we define the degree of an edge:
$d_e = \min \{d_u,d_v\}$. 
For a collection of edges $R$, we define $d_R =\sum_{e\in R}d_e$.
In particular, $d_E := \sum_{e\in E}d_e$. 

% A graph $G$ is called $k$-degenerate if every induced subgraph of $G$ has a vertex of degree at most $k$. The {\em degeneracy}  $\degen_G$ is the smallest integer $k$ such that $G$ is $k$-degenerate. The {\em arboricity} $\alpha_G$ is the smallest integer $p$ such that the edge set $E(G)$ can be partitioned into $p$ forests. When the graph $G$ is clear from the context, we drop the subscript $G$ and simply use $\degen$ and $\alpha$ instead. The {\em degeneracy} and {\em arboricity} are closely related concepts: for every graph $\alpha \le \degen \le 2\alpha-1$. In this paper, we choose to state our results in terms of {\em degeneracy}; they can be equivalently stated 
% in terms of {\em arboricity} as well.

Chiba and Nishizeki~\cite{Chiba1985} proved the
following insightful connection between the sum of degrees of
the edges in a graph $d_E$ and its {\em degeneracy} $\degen$.
\footnote{Note that Chiba and Nishizeki~\cite{Chiba1985} stated
their results in terms of {\em arboricity}. 
As $\alpha \le \degen$ for each graph $G$ with 
{\em arboricity} $\alpha$, the same result holds 
with respect to {\em degeneracy} $\degen$ as well.
}

\begin{lemma}[Lemma 2 in~\cite{Chiba1985}]
\label{lem:chiba}
For a graph $G$ with $m$ edges and {\em degeneracy} $\degen$, 
$$ d_E = \sum_{e\in E} d_e \leq  2m \degen \,.$$
\end{lemma}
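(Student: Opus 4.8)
The plan is to convert the sum $d_E = \sum_{e=\{u,v\}} \min(d_u,d_v)$ into a form that a bounded out-degree orientation can control. The right orientation comes directly from the peeling characterization of degeneracy noted above: iteratively remove a minimum-degree vertex, and recall that the degree of each vertex \emph{at the moment of its removal} is at most $\degen$. Fixing this deletion order, I would orient every edge $\{u,v\}$ from the endpoint removed earlier to the one removed later. Then the out-neighbors of a vertex $v$ are exactly those neighbors of $v$ still present when $v$ is deleted, so every vertex satisfies $d^+(v) \le \degen$.

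The second ingredient is a trivial per-edge inequality. For an edge $e=\{u,v\}$ oriented $u \to v$, the definition $d_e = \min(d_u,d_v)$ immediately gives $d_e \le d_u$; that is, the edge's degree is bounded by the degree of its \emph{tail}. (This holds no matter which endpoint is the tail, since the minimum is at most either degree, so I do not need to coordinate the orientation with which endpoint has smaller degree.) Hence each edge can be charged to its tail, and a vertex $v$ that is the tail of $d^+(v)$ edges receives total charge at most $d^+(v)\, d_v$.

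Summing over all vertices and invoking the out-degree bound together with the handshake lemma $\sum_{v} d_v = 2m$ then closes the argument:
\begin{equation*}
d_E \;=\; \sum_{e=\{u,v\}} \min(d_u,d_v) \;\le\; \sum_{v \in V} d^+(v)\, d_v \;\le\; \degen \sum_{v \in V} d_v \;=\; 2m\degen.
\end{equation*}
There is no genuine obstacle here: the only step needing care is the existence of the out-degree-$\degen$ orientation, which is precisely the peeling characterization of degeneracy and may be taken as given. (An equivalent route, matching the original arboricity phrasing of Chiba--Nishizeki, is to partition $E$ into $\alpha \le \degen$ forests and orient each forest toward its roots, giving $d^+(v) \le \alpha$; the identical charging argument then yields $d_E \le 2\alpha m \le 2\degen m$.)
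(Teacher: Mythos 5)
Your proof is correct. Note that the paper itself gives no proof of this lemma --- it is imported verbatim as Lemma~2 of Chiba--Nishizeki --- so the comparison is really against the original argument. Your degeneracy-ordering version is a clean, self-contained rendering of it: the peeling order gives an orientation with $d^{+}(v)\le\degen$ for every $v$, the bound $d_e=\min(d_u,d_v)\le d_{\mathrm{tail}(e)}$ holds regardless of which endpoint ends up as the tail, and grouping edges by tail yields $\sum_{e}d_e\le\sum_{v}d^{+}(v)\,d_v\le\degen\sum_{v}d_v=2m\degen$. This is logically equivalent to Chiba--Nishizeki's forest-decomposition argument (which you also sketch): there, each vertex has at most one parent per forest, so each of the $\alpha$ forests contributes at most $\sum_v d_v=2m$, giving $2\alpha m\le 2\degen m$. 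The degeneracy-ordering route has the small advantage of matching the paper's chosen parameter directly, avoiding the detour through arboricity and the inequality $\alpha\le\degen$; the forest route is what makes the constant $2\alpha m$ (rather than $2\degen m$) visible, which is the sharper form of the original statement. Either way the argument is complete and there is no gap.
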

As a corollary, we get the following result.
\begin{corollary}[~\cite{Chiba1985}]
\label{lem:max_tri}
For a graph $G$ with $m$ edges and {\em degeneracy} $\degen$, the maximum number of triangles in $G$
is at most $2m\degen$.
\end{corollary}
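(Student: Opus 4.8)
The plan is to bound, for each edge, the number of triangles it participates in, and then sum these counts over all edges while invoking Lemma~\ref{lem:chiba}. First I would observe that every triangle containing an edge $e=\{u,v\}$ corresponds to a common neighbor $w \in N(u) \cap N(v)$, and conversely. Since any such $w$ must lie in both neighborhoods, it must in particular lie in the smaller one. Hence, writing $t_e$ for the number of triangles incident to $e$, we have $t_e = |N(u) \cap N(v)| \le \min\{d_u, d_v\} = d_e$.

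Next I would sum this per-edge inequality over all of $E$. Every triangle in $G$ has exactly three edges and is therefore counted exactly three times in $\sum_{e \in E} t_e$, so $3T = \sum_{e \in E} t_e$. Combining with the bound $t_e \le d_e$ gives $3T \le \sum_{e \in E} d_e = d_E$, and then Lemma~\ref{lem:chiba} yields $3T \le 2m\degen$. Rearranging, $T \le \tfrac{2}{3} m\degen \le 2m\degen$, which is in fact slightly stronger than the stated bound.

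I do not expect any real technical obstacle here; the whole argument rests on the elementary fact that the common neighborhood of an edge's two endpoints is contained in the smaller of their neighborhoods, which forces $t_e \le d_e$. Given this, the corollary follows immediately from the Chiba--Nishizeki edge-degree sum bound of Lemma~\ref{lem:chiba}, with the factor of three arising from triangle double counting only improving the final constant beyond what is claimed.
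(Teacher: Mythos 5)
Your proof is correct and matches the paper's (implicit) derivation: the paper simply presents this as an immediate corollary of Lemma~\ref{lem:chiba}, and your argument --- $t_e \le d_e$ for each edge, triple-counting of triangles giving $3T = \sum_{e} t_e \le d_E \le 2m\degen$ --- is exactly the intended route, even yielding the slightly stronger constant $T \le \tfrac{2}{3}m\degen$.
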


% We will make use of {\em degree based vertex ordering},
% denoted by $\order$, in our algorithms. In this
% ordering, $u \order v$ iff $d_u < d_v$ or $d_u=d_v$ and $u$
% appears before $v$ in some arbitrary but fixed (say
% lexicographical) ordering of the vertices.

We use the notation $\tO(~\cdot~)$ to hide polynomial dependencies on 
$(1/\eps)$ and $\log n$ terms, where $\eps$ is the error
parameter. For designing our algorithms, we focus on the expected space usage. This can be easily converted
into a worst-case guarantee by applying Markov inequality
--- simply abort if the space usage runs beyond $c$ times the expected space usage, for some constant $c$. This only increases
the error probability by an additive $1/c$ amount.

We use the following variants of the Chernoff bound and Chebyshev inequality for analyzing our algorithms.
\begin{theorem}[Chernoff Bound~\cite{Ch52}] 
\label{thm:chernoff}
  Let $X_1,X_2,\ldots,X_r$ be mutually independent indicator random variables with expectation $\mu$. Then, for every $\eps$ with $0< \eps <1$, we have
  \begin{align*}
       \Pr\left[ \Big\lvert \frac{1}{r}\sum_{i=1}^{r} X_i - \mu \Big\rvert \ge \eps \mu \right] \le 2\exp\left( - \eps^2 r \mu  / 3 \right)
  \end{align*}
\end{theorem}
\begin{theorem}[Chebyshev Inequality~\cite{Alsmeyer2011}]
\label{thm:chebyshev}
Let $X$ be a random variable with expectation $\mu$
and variance $\Var[X]$. Then, for every $\eps>0$,
\begin{align*}
    \Pr\left[ \lvert X - \mu \rvert \geq \eps \mu\right] \leq \frac{\Var[X]}{ \eps^2 \mu^2} \,.
\end{align*}
\end{theorem}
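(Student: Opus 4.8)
The plan is to derive this as a direct consequence of Markov's inequality applied to the squared deviation $(X-\mu)^2$, which is the standard route. Since the paper does not separately state Markov's inequality, I would first establish it: for any nonnegative random variable $Y$ and any threshold $a>0$, one has $\Pr[Y \ge a] \le \EX[Y]/a$. This follows from the pointwise bound $a\cdot\mathbf{1}[Y \ge a] \le Y$, which is valid precisely because $Y \ge 0$; taking expectations gives $a\,\Pr[Y \ge a] \le \EX[Y]$, and dividing by $a$ yields the claim.

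Next I would apply Markov's inequality to the nonnegative random variable $Y = (X-\mu)^2$ with the threshold $a = \eps^2\mu^2$. The key observation is an event identity: since $\eps\mu > 0$ (in all our applications $\mu$ is the expectation of a nonnegative estimator, so $\mu>0$, and $\eps>0$ by hypothesis), squaring is monotone on the nonnegative reals, and therefore
\[
\{\,\lvert X - \mu\rvert \ge \eps\mu\,\} = \{\,(X-\mu)^2 \ge \eps^2\mu^2\,\}.
\]
Consequently $\Pr[\lvert X-\mu\rvert \ge \eps\mu] = \Pr[(X-\mu)^2 \ge \eps^2\mu^2] \le \EX[(X-\mu)^2]/(\eps^2\mu^2)$.

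Finally, I would invoke the definition of variance, namely $\Var[X] = \EX[(X-\mu)^2]$, to rewrite the numerator and obtain exactly the stated bound $\Var[X]/(\eps^2\mu^2)$.

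There is essentially no technical obstacle here: the result is classical, and the entire argument is two lines once Markov's inequality is available. The only point requiring a moment's care is the event-equivalence step, which relies on $\eps\mu$ being nonnegative so that the two-sided deviation event for $X$ coincides with the one-sided event for the nonnegative variable $(X-\mu)^2$; this is guaranteed in our setting because the relevant expectations $\mu$ are always strictly positive. The single ``creative'' ingredient is thus the squaring trick, which converts a symmetric deviation into a nonnegative random variable to which Markov's inequality directly applies.
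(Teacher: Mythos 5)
Your proof is correct: it is the standard derivation of Chebyshev's inequality via Markov's inequality applied to the nonnegative variable $(X-\mu)^2$, and you rightly flag the one subtlety, namely that the event identity $\{\lvert X-\mu\rvert \ge \eps\mu\} = \{(X-\mu)^2 \ge \eps^2\mu^2\}$ (and indeed the statement itself, which divides by $\mu^2$) requires $\mu > 0$, which holds in all of the paper's applications. Note that the paper does not prove this theorem at all --- it imports it as a cited classical result --- so there is no in-paper argument to compare against; your two-step proof is exactly what one would write.
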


\section{Warm-up: An abstract model}
\label{sec:warm_up}

In this section, we consider a streaming model
equipped with a degree oracle: queried with a vertex $v$, 
the oracle returns $d_v$. Furthermore, we make a rather strong
assumption: there is no cost associated with the queries. 
McGregor~\etal~\cite{McGregor2016} designed a $\tO(m^{3/2}/\nt)$ space $3$-pass streaming algorithm
in this model --- their algorithm makes $O(m)$ many
degree queries. We describe an $\tO(m\degen/\nt)$-space 
$3$ pass algorithm in this model. Our estimator makes $2m$ many degree queries and requires $3$-pass. For {\em bounded
degeneracy} graph families, this translates to a space reduction by a factor of $O(\sqrt{m})$. 
In the next section, we show how to design
a $\tO(m\degen/\nt)$-space constant pass algorithm in the
traditional streaming model.

Our main idea is to sample edges from the stream with 
probability proportional to its degree. In general streaming settings,
this is not possible as we do not know the degree 
of the edges apriori. However, the model that we consider here is tailor-made for this purpose. It shows the effectiveness
of degree-biased edge samples in estimating triangle count
and provides motivation for taking up a similar
sampling approach in the general streaming model.

We present our basic estimator in~\cref{alg:triangle_estimate_ideal}. In the full 
algorithm, we will run multiple instances of this
estimator in parallel and report the ``median
of the mean''~\cite{Chakrabartics49} as our final estimate.
%We show that our estimator is unbiased, and has bounded variance..

\begin{algorithm}[!ht]
\caption{A Triangle Estimator }
\label{alg:triangle_estimate_ideal}
    \begin{algorithmic}[1]
    \Procedure{IdealEstimator}{Graph $G=(V,E)$} 
    \label{proc:ideal_estimator}
    \State Pass 1: Sample an edge $e$ with probability $d_e/d_E$.
    \State Pass 2: Sample a vertex $w$ from $N(e)$ u.a.r.
    \State Pass 3: Check if $\{e,w\}$ forms a triangle.
    \If {$\tau = \{e,w\}$ is a triangle} 
    \State Call \textsc{IsAssigned}$(\tau, e)$. 
    \State If returned YES, then set $Y=1$; else set $Y=0$.
    \Else
    \State Set $Y=0$.
    \EndIf
    \State Set $X = d_E \cdot Y$.
    \State return $X$.
    \EndProcedure
    % \Procedure{IsAssigned}{ triangle $\tau$, edge $e$ } 
    % \label{proc:ideal_estimator}
    % \State Let $\tau=\{u,v,w\}$ and $e=\{u,v\}$.
    % \If {$u \prec w$ and $v\prec w$} 
    % \State return YES
    % \Else 
    % \State return NO
    % \EndIf
    % \EndProcedure
    \end{algorithmic}    
\end{algorithm}

\mypar{Implementation Details} 
The degree proportional sampling is achieved by using 
weighted reservoir sampling~\cite{chao1982general}. On arrival of the
edge $e=\{u,v\}$ in the stream, we make two degree
queries to find $d_e$.
The method \textsc{IsAssigned} is required to 
ensures that every triangle is uniquely associated with
one of its three edges. Other than this, there
is no constraints on the implementation of 
this method. For example, we can associate every triangle
to the edge with lowest degree, breaking ties arbitrarily
(but consistently). Let $t_e$ denote the number 
of triangles assigned to the edge $e$.
Clearly, $\sum_{e \in E} t_e = {\nt}$.
% that every
% triangle is associated with an unique edge-vertex pair. 
% We use {\em degree based vertex ordering} and 
% assign each triangle $\tau= {e,w}$ to the edge $e$
% such that the remaining vertex $w$ succeeds both
% the end points of $e$ in the ordering. 
% We now turn to analyze the above estimator.

\mypar{Analysis}
First, we show the estimator is unbiased.
\begin{align*}
    \EE[X] &= \sum_{e \in E} \frac{d_e}{d_E} \cdot \EE[X|e] \\
            & = \sum_{e \in E} {d_e} \cdot  \EE[Y|e] \\
            &= \sum_{e \in E} {d_e} \cdot  \frac{t_e}{d_e} 
            = \sum_{e \in E} t_e 
            = \nt
\end{align*}
Now we bound the variance of the estimator.
\begin{align*}
    \Var[X] \leq \EE[X^2] 
            &= \sum_{e \in E} \frac{d_e}{d_E} \cdot \EE[X^2|e] \\
            &= \sum_{e \in E} d_e \cdot d_E \cdot  \EE[Y^2|e] \\
            &= \sum_{e \in E} {d_e} \cdot d_E \cdot   \frac{t_e}{d_e} \\
            &= d_E \cdot \sum_{e \in E} t_e 
            = d_E \cdot {\nt}
\end{align*}
So, running $\tO(\Var[X]/\EE[X]^2)=\tO(d_E/\nt)=\tO(m\degen/\nt)$-many estimators independently in parallel suffices for a $(1\pm\eps)$-approximate estimate. Since each copy of the estimator
requires constant space, the overall space usage is 
bounded by $\tO(m\degen /\nt)$.
\section{Our Main Algorithm}
In this section, we present our streaming triangle estimator.
As promised, our algorithm does not assume access to a degree oracle. If the model is indeed equipped with a degree oracle, then we can save a few passes over the stream. Perhaps more importantly, the number of queries to the oracle is upper bounded by the space usage of our algorithm. Our main algorithmic result is the following.
\begin{theorem} 
\label{thm:main2} Consider a graph $G$ of degeneracy at most $\degen$,
that is input as an arbitrary edge stream. There is a streaming algorithm that 
outputs a $(1\pm\eps)$-approximation to $T$, with high probability\footnote{We use ``high probability" to denote errors less than $1/3$.}, and has the following properties.
It makes six passes over the input stream and uses space $(m\degen/T)\cdot \poly(\log n, \eps^{-1})$.
\end{theorem}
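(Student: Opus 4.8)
The plan is to realize the ``ideal estimator'' of \Cref{alg:triangle_estimate_ideal} in the true streaming model, paying for the removal of the degree oracle with a few extra passes. The key conceptual tool from the warm-up is that degree-biased edge sampling gives an unbiased estimator of $\nt$ with variance $d_E \cdot \nt = O(m\degen\nt)$, so that $\tO(m\degen/\nt)$ parallel copies suffice. Everything therefore reduces to simulating three operations within $(m\degen/\nt)\poly(\log n,\eps^{-1})$ space: (i) sampling an edge proportional to its degree $d_e$, (ii) sampling a uniform neighbor of the lower-degree endpoint, and (iii) evaluating the assignment predicate \textsc{IsAssigned}$(\tau,e)$, all without a priori knowledge of degrees.

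\smallskip
\noindent First I would handle the degree-proportional sampling. Sample a set $R$ of $r = \Theta(m\degen/\nt)\cdot\poly(\log n,\eps^{-1})$ uniform random edges in one pass via reservoir sampling. In a second pass, compute $d_e$ for every $e\in R$ by counting, for each endpoint, its degree in the stream; this costs $O(|R|)$ counters. Now selecting an edge of $R$ with probability $d_e/d_R$ simulates, in expectation over $R$, sampling proportional to $d_e$ from the whole graph, which is exactly what the ideal analysis needs. A third pass draws a uniform neighbor $w$ of the low-degree endpoint of each sampled edge, and a fourth pass checks whether $\{e,w\}$ closes a triangle. The unbiasedness calculation is identical to the warm-up; the main subtlety is that the $R$-correlated samples require bounding variance over the joint randomness, which is where the choice of $r$ enters. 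The \emph{hard part}, and the real content beyond the warm-up, is implementing \textsc{IsAssigned} cheaply. The natural rule assigns a triangle $\tau$ to its edge of minimum $t_e$ (triangles assigned), but evaluating this requires estimating $t_e$ for all three edges of $\tau$, which is expensive. As flagged in \Cref{sec:ideas}, I would borrow the modified assignment rule from the sublinear clique-counting literature~\cite{ERS20,eden2018faster}: discard triangles containing any \emph{high-degree} edge (those with $d_e$ above a threshold $\Theta(\sqrt{m})$ or $\Theta(\degen\cdot\text{poly})$), so that the residual assignment only ever inspects low-degree edges, and replace the exact ``minimum $t_e$'' test with the coarse decision ``is $t_e = \Omega(\degen)$?'', which tolerates sampling error. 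The degeneracy bound (\Cref{lem:chiba}, \Cref{lem:max_tri}) guarantees that at most $O(\degen)$ triangles are assigned to any edge, keeping the assigned-triangle distribution low-variance, while discarding high-degree-edge triangles changes the count by at most an $\eps$-fraction.

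\smallskip
\noindent Concretely, I would prove three claims and assemble them. First, a \emph{correctness} claim: the modified assignment leaves all but an $\eps$-fraction of triangles assigned, so the estimator targets $(1\pm\eps)\nt$. Second, a \emph{variance} claim: each assigned edge receives $O(\degen)$ triangles, so the per-copy second moment is $O(d_E\nt) = O(m\degen\nt)$, matching the warm-up; this feeds Chebyshev (\Cref{thm:chebyshev}) and a median-of-means amplification (\Cref{thm:chernoff}) to get high-probability $(1\pm\eps)$ accuracy from $\tO(m\degen/\nt)$ copies. Third, a \emph{space/pass} claim: the \textsc{IsAssigned} test needs one further pass to estimate the relevant $t_e$ via an auxiliary degree-biased sub-sample on the low-degree edges, and a final pass to resolve the predicate, totalling six passes, with every intermediate structure ($R$, its degrees, the $t_e$ estimators) fitting in $\tO(m\degen/\nt)$ space. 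Throughout I treat expected space and convert to worst-case by the abort-on-overflow Markov argument noted in the preliminaries. The delicate calculation I expect to be most error-prone is calibrating the degree threshold and the $t_e$-estimation accuracy simultaneously so that the $\Omega(\degen)$ decision is reliable, the discarded triangles stay within the $\eps$ budget, \emph{and} the storage for the $t_e$ estimators does not exceed $\tO(m\degen/\nt)$; balancing these three constraints against one another is the crux of the proof.
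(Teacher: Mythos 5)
Your proposal follows essentially the same route as the paper: a uniform edge sample $R$ whose degrees are computed in a second pass, degree-biased resampling from $R$ to emulate the ideal estimator, and an assignment rule that discards triangles on heavy/costly edges and replaces the exact minimum-$t_e$ rule with a coarse ``$t_e = \Omega(\degen)$'' test, with the same six-pass accounting. The one place you stay vague --- calibrating the degree threshold (the paper uses $d_e > m\degen^2/(\eps^2 T)$) jointly with the $t_e$-estimation sample size $s = \tO(m\degen/T)$ --- is indeed where the paper's Lemmas on heavy/costly triangles and the Chernoff bounds for the \textsc{Assignment} subroutine do the real work, but you correctly identified it as the crux rather than missing it.
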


We describe our estimator in~\Cref{alg:triangle_estimate}.
We set the parameters $r$ and $\ell$ later in the analysis. 
In analyzing our algorithm, the procedure \textsc{IsAssigned}
would play a crucial role. As discussed earlier,
\textsc{IsAssigned} takes as input a triangle and an edge,
and outputs whether the triangle is assigned to that edge.
We want this procedure to possess four properties: 
(1) For a given triangle, it is either unassigned or assigned uniquely to 
one of its three participating edges.
(2) {\em almost} all the triangles are assigned,
(3) For any fixed edge, not too many triangles are assigned to it, and
(4) The space complexity of the procedure is bounded by $\tO({m\degen}/{{\nt}})$.
The first two properties are required to ensure the overall accuracy of the estimator. The third property would be central to bounding the variance of the estimator. The final property will ensure
the overall space complexity of our triangle estimator is bounded 
by $\tO({m\degen}/{\nt})$.

We analyze our triangle
estimator assuming a black-box access to a \textsc{IsAssigned} procedure
that satisfies the above four properties. In the next section, we will 
take up the task of designing such an assignment procedure in the 
streaming setting. We make the above discussion rigorous and formal below.

Assume $\assign_e$ denotes the number of triangles assigned to
the edge $e$ by the procedure \textsc{IsAssigned}.
We use $\assign_{\max}$ to denote the maximum number 
of triangles that any edge has been assigned to:
$\assign_{\max} = \max_{e\in E} \assign_e$.
Denote the number of triangles that \textsc{IsAssigned} assigns to some
edge by $\total = \sum_{e\in E} \assign_e$.
Then, for any positive constant $\eps$ and $\delta$,
we define an $(\eps,\delta)$-accurate \textsc{IsAssigned}
procedure below.
\begin{definition}[$(\eps,\delta)$-accurate \textsc{IsAssigned}]
\label{def:assign}
A procedure \textsc{IsAssigned} that assigns a triangle
to an edge or leaves it unassigned,
is $(\eps,\delta)$-accurate if it satisfies the following four
properties.
\begin{enumerate}
    \item {\em Unique Assignment:} For each triangle, it is either unassigned or uniquely assigned to one of the three participating edges. This implies,
    $\total \leq \nt$.
    \item {\em Almost All Assignment:} With probability at least $1-\delta$, $\total \geq (1-12\eps){\nt}$.
    \item {\em Bounded Assignment:} With probability at least $1-\delta$, $\assign_{\max} \leq \degen/\eps$.
    \item {\em Bounded Space Complexity:} Each call requires 
    $\tO\left( {m\degen}/{\nt} \right)$ bits of space.
\end{enumerate}
\end{definition}

We now analyze our algorithm assuming a black-box access to $(\eps,O(1/n^5))$-accurate \textsc{IsAssigned}. The analysis consists of two parts.
First, we show that for a certain settings of $r$ and $\ell$,
our final estimate $X$ is indeed 
a $(1\pm \eps)$ approximation to the true triangle count.
In the sequel, we bound the space complexity of our algorithm. 

\begin{algorithm}[!ht]
\caption{Estimation of triangle count}
\label{alg:triangle_estimate}
    \begin{algorithmic}[1]
    \Procedure{EstimateTraingle}{Graph $G=(V,E)$}
    \State Pass 1: Sample $r$ many edges  u.a.r: $R = \{e_i\}_{i=1}^r$.
    \State Pass 2: Compute $d_e$ for each $e\in R$.
    \For {$i=1$ to $\ell$} \label{line:loop}
    \State Sample an edge $e \in R$ independently with prob. $d_e/d_R$.
    \State Pass 3: Sample a vertex $w$ from $N(e)$ u.a.r.
    \State Pass 4: Check if $\{e,w\}$ forms a triangle.
    \If {$\tau = \{e,w\}$ is a triangle} \label{line:if}
    \State Call \textsc{IsAssigned}$(\tau, e)$. 
    \State If returned YES, then set $Y_i=1$; else set $Y_i=0$.
    \Else
    \State Set $Y_i=0$.
    \EndIf
    \EndFor
    \State Set $Y = \frac{1}{\ell} \sum_{i=1}^{\ell} Y_i$, and 
    $X = \frac{m}{r} \cdot d_R \cdot Y$. \label{line:Y}
    \State return $X$.
    \EndProcedure
    % \Procedure{EdgeSample}{Graph $G=(V,E)$, integer $r$} \label{proc:edge_sample}
    % \For {$i=1$ to $r$}
    % \State sample an edge $e_i\in E$ independently and u.a.r.
    % \EndFor
    % \State Let $R=\{e_i\}_{i=1}^{r}$,
    % $d_R = \sum_{i=1}^{r} d_{e_i}$, and $t_R = \sum_{i=1}^{r} t_{e_i}$.
    % \Comment{We do not estimate $t_R$}
    % \State return $R$.
    % \EndProcedure
    \end{algorithmic}
\end{algorithm}

We begin with analyzing the {\em quality}
of the (multi)set of uniform random edges $R$. Collectively 
through~\Cref{lem:set_R,def:good_R,lem:good_set}, we 
establish that for a suitable choice of the parameter $r$,
the (multi)set $R$ possesses desirable properties with high
probability.
\begin{lemma}
\label{lem:set_R}
Let $d_R = \sum_{e \in R} d_e$ and $\assign_R = \sum_{e \in R} \assign_e$.
For any constant $\eps>0$ we have
\begin{enumerate}
    \item $\EX[d_R] = r \cdot \frac{d_{E}}{m} $ and $\EX [\assign_R] = r\cdot  \frac{\total}{m} $,
    \item $\Pr \left[d_R \leq \EX \left[ d_R \right] \cdot \frac{\log n}{\eps}\right] \geq 1 -\frac{\eps}{\log n}$,
    \item $\Pr \left[|\assign_R -\EX[t_R]| \leq \eps \EX[\assign_R] \right] \geq 1 - \frac{1}{\eps^2} \cdot \frac{1}{r} \cdot \frac{m \cdot  \assign_{\max} }{\total} $.
\end{enumerate}
\end{lemma}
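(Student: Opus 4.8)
The plan is to treat $R = \{e_1,\dots,e_r\}$ as $r$ independent uniform draws from $E$ (the ``multiset'' language of the statement signals sampling with replacement, hence i.i.d.\ terms), and to dispatch the three claims with three standard tools. The single fact I lean on repeatedly is that for any function $f$ on edges, a uniform draw $e_i$ has $\EX[f(e_i)] = \frac{1}{m}\sum_{e\in E} f(e)$. For part (1) this is just linearity of expectation: $\EX[d_{e_i}] = \frac{1}{m}\sum_{e\in E} d_e = d_E/m$, so summing the $r$ draws gives $\EX[d_R] = r\cdot d_E/m$, and the identity for $\assign_R$ is verbatim the same with $d_e$ replaced by $\assign_e$ and $\sum_{e\in E}\assign_e = \total$.

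For part (2) the only observation needed is that $d_R \geq 0$, so a one-sided Markov estimate applies directly. Taking the threshold $\EX[d_R]\cdot(\log n)/\eps$ gives $\Pr[d_R \geq \EX[d_R]\cdot(\log n)/\eps] \leq \eps/\log n$, and passing to the complement (and noting $\{d_R \leq a\}\supseteq\{d_R < a\}$) yields the stated bound. It is worth flagging that no concentration better than Markov is claimed here, which is why this tail is only one-sided and loses the $\log n/\eps$ factor — $d_R$ can be heavy-tailed when the $d_e$ are very uneven.

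For part (3) I would invoke Chebyshev (Theorem~\ref{thm:chebyshev}) with $\mu = \EX[\assign_R] = r\total/m$, so the real work is bounding $\Var[\assign_R]$. Since $\assign_R$ is a sum of $r$ i.i.d.\ terms, $\Var[\assign_R] = r\Var[\assign_{e_1}] \leq r\,\EX[\assign_{e_1}^2] = \frac{r}{m}\sum_{e\in E}\assign_e^2$. The one inequality that matters is $\sum_{e\in E}\assign_e^2 \leq \assign_{\max}\sum_{e\in E}\assign_e = \assign_{\max}\total$, which is precisely where the eventual Bounded Assignment guarantee on $\assign_{\max}$ will be exploited. Substituting $\Var[\assign_R]\leq r\assign_{\max}\total/m$ and $\mu^2 = r^2\total^2/m^2$ and simplifying collapses $\Var[\assign_R]/(\eps^2\mu^2)$ to $\frac{1}{\eps^2}\cdot\frac{1}{r}\cdot\frac{m\assign_{\max}}{\total}$, and the complement of the Chebyshev tail is the claim.

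I do not expect a genuine obstacle: the lemma is a bookkeeping assembly of linearity of expectation, Markov, and Chebyshev. The only step requiring a deliberate choice is the second-moment bound in part (3) — using $\sum_e\assign_e^2 \leq \assign_{\max}\total$ rather than an exact variance — since this is exactly the form that later meshes with the choice of $r$ needed to push the failure probability below a constant once $\assign_{\max} = O(\degen/\eps)$ is established.
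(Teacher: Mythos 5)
Your proposal is correct and matches the paper's own proof essentially step for step: linearity of expectation over i.i.d.\ uniform draws for part (1), Markov for part (2), and Chebyshev for part (3) with the same second-moment bound $\sum_{e}\assign_e^2 \leq \assign_{\max}\total$ yielding $\Var[\assign_R]\leq r\assign_{\max}\total/m$. No gaps.
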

\begin{proof}
We first compute the expected value of $d_R$ and $\assign_R$. 
We define two sets of random variables, $Y_{i}^{d}$ and $Y_{i}^{t}$ for $i\in [r]$ as follows:
$Y_{i}^{d} = d_{e_i}$, and $Y_{i}^{t} = \assign_{e_i}$. Then, $d_R = \sum_{i=1}^{r} Y_{i}^{d}$ and $\assign_R = \sum_{i=1}^{r} Y_{i}^{t}$. We have
\begin{align*}
    \EX \left[Y_{i}^{d} \right] &= \sum_{e \in E} \Pr[e_i=e] \cdot \EX 
    \left[ Y_{i}^{d} | e_i = e\right] \\
    &= \frac{1}{m} \sum_{e \in E} d_e \\
    &= \frac{d_{E}}{m} \,.
\end{align*}
Then, by linearity of expectation, we get $\EX [d_R] = r \cdot {d_{E}}/{m} $.
Analogously, we have $\EX [\assign_R] = r\cdot  \total /m $.

We now turn our focus on the concentration of $d_R$. This is achieved by a simple application of Markov inequality.
\begin{align*}
    \Pr \left[ d_R \geq \EX \left[ d_R \right] \cdot \frac{\log n}{\eps} \right] \leq \frac{\eps}{\log n} \,.
\end{align*}

To prove a concentration bound on $\assign_R$, we study the variance of $\assign_R$.
By independence, we have 
\begin{align*}
    \Var[\assign_R] 
    &= \sum_{i=1}^{r} \Var [Y_i^t] \\
    &\leq  \sum_{i=1}^{r} \EX [(Y_i^t)^2] \\
    &= \sum_{i=1}^{r} \sum_{e \in E} \Pr [e_i = e]~\EX [(Y_i^t)^2 | e_i=e] \\
    &= \frac{r \cdot \sum_{e \in E} \assign_e^2}{m}  \\
    &\leq  \frac{r \cdot \assign_{\max} \sum_{e \in E} \assign_e}{m}  \\
    &= \frac{r \cdot \assign_{\max} \total }{m}\,.
\end{align*}
Then, the item (3) of the lemma follows by an application of Chebyshev inequality (~\Cref{thm:chebyshev}).
\end{proof}
We next define a collection of edges $R$ as {\em good}
if the conditions in items (2) and (3) in~\Cref{lem:set_R} are satisfied. Formally, we have the following definition.
\begin{definition}[A {\em good} collection of edges]
\label{def:good_R}
We call a fixed collection of edges $R$, $\eps$-good if the following two conditions are true.
\begin{align}
\label{eqn:goodR}
d_R &\leq \frac{\log n}{\eps} \cdot |R| \cdot \frac{d_E}{m}  \\
\assign_R &\in \left[ (1-\eps)\cdot |R|\cdot \frac{\total}{m}~~,~~ (1+\eps)\cdot |R|\cdot \frac{\total}{m}\right] 
\end{align}
\end{definition}
\begin{lemma}[Setting of $r$ for an $\eps$-good R]
\label{lem:good_set}
Let $0< \eps < 1/6$ and $c>6$ be some constants, and $r = \frac{c \log n}{\eps^2} \frac{m \assign_{\max}}{\total}$. Then, with probability at least $1-\frac{1}{6\log n}$, $R$ is $\eps$-good.
\end{lemma}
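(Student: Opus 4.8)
The plan is to observe that the two inequalities defining an $\eps$-good set are, after substituting $|R| = r$ together with the expectations computed in \Cref{lem:set_R}, exactly the two tail events already controlled there. Since $\EX[d_R] = r\,d_E/m$, the first $\eps$-good condition $d_R \le \frac{\log n}{\eps}\,|R|\,\frac{d_E}{m}$ is literally the event $d_R \le \frac{\log n}{\eps}\,\EX[d_R]$; and since $\EX[\assign_R] = r\,\total/m$, the second condition $\assign_R \in \left[(1-\eps)|R|\frac{\total}{m},\,(1+\eps)|R|\frac{\total}{m}\right]$ is literally the event $|\assign_R - \EX[\assign_R]| \le \eps\,\EX[\assign_R]$. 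So I would prove the lemma by invoking the two tail bounds of \Cref{lem:set_R} separately and then taking a union bound, with no new probabilistic input required.

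First, by item (2) of \Cref{lem:set_R}, the first condition fails with probability at most $\eps/\log n$. Note this bound holds for every $r$, so the choice of $r$ plays no role in controlling the $d_R$ condition.

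Second, I would substitute the prescribed $r = \frac{c\log n}{\eps^2}\frac{m\,\assign_{\max}}{\total}$ into the Chebyshev failure bound of item (3), namely $\frac{1}{\eps^2}\cdot\frac{1}{r}\cdot\frac{m\,\assign_{\max}}{\total}$. The factors $\frac{1}{\eps^2}$ and $\frac{m\,\assign_{\max}}{\total}$ cancel against the matching factors inside $r$, leaving precisely $\frac{1}{c\log n}$. Thus the second condition fails with probability at most $\frac{1}{c\log n}$, which is exactly why $r$ is set to this value: it is the smallest number of samples that drives the variance-based failure probability down to an $O(1/\log n)$ term.

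Finally, a union bound gives total failure probability at most $\frac{\eps}{\log n} + \frac{1}{c\log n}$. Using the hypotheses $\eps < 1/6$ and $c > 6$, this is an $O(1/\log n)$ quantity that meets the stated $\frac{1}{6\log n}$ threshold. There is no genuine obstacle here beyond \Cref{lem:set_R} itself; the only delicate point is the constant bookkeeping, where one must track that $r$ equates the Chebyshev bound to exactly $\frac{1}{c\log n}$ and tune $c$ (and, if needed, slightly tighten the admissible range of $\eps$) so that the two failure probabilities sum below the target. I expect this accounting to be routine.
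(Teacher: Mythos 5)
Your proposal is correct and is essentially the paper's own proof: the paper's argument is exactly to plug the prescribed $r$ into \Cref{lem:set_R}, observe that the two $\eps$-good conditions coincide with the two tail events there, and union-bound using the constraints on $c$ and $\eps$. You even correctly flag the one loose point (that $\eps<1/6$ and $c>6$ give a combined failure probability closer to $\tfrac{1}{3\log n}$ than $\tfrac{1}{6\log n}$, so the constants need mild tightening), which the paper glosses over.
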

\begin{proof}
The lemma follows by plugging in $r = \frac{c \log n}{\eps^2} \frac{m \assign_{\max}}{\total}$ in~\Cref{lem:set_R}
and using the bounds on $c$ and $\eps$.
\end{proof}

We have established that the random collection of edges $R$ is 
{\em good} with high probability.
We now turn our attention to the random variable $Y$, as 
defined on~\Cref{line:Y} ~\Cref{alg:triangle_estimate}.
Together in~\Cref{lem:triangle_est,lem:goodY} we show 
that, if $R$ is {\em good} then for a suitably chosen parameter $\ell$, the random variable $Y$ is well-concentrated around its mean.

\begin{lemma}
\label{lem:triangle_est}
Let $R$ be a fixed collection of edges, and $Y_R$ denote the value of the random variable $Y$ as defined on~\Cref{line:Y} ~\Cref{alg:triangle_estimate} on $R$.  Then,
\begin{enumerate}
    \item $\EX [Y_R] = \frac{\assign_R}{d_R}$,
    \item $\Pr \left[~|Y_R - \EX[Y_R]| \geq \eps \EX [Y_R]~ \right] 
    \leq \exp \left( - \ell \cdot \frac{\eps^2}{3} \cdot \frac{\assign_R}{d_R}  \right)$.
\end{enumerate}
\end{lemma}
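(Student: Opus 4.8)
The plan is to first pin down $\EX[Y_R]$ by analyzing a single loop iteration, and then obtain the concentration bound by treating the $\ell$ iterations as i.i.d.\ indicators and invoking the Chernoff bound. Throughout I would keep the collection $R$ fixed and also fix the outcome of the assignment procedure, so that each $\assign_e$ is a fixed integer; all randomness considered below is then over the $\ell$ independent samples drawn in the loop at \Cref{line:loop}.

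For part (1), I would condition a single iteration on the sampled edge $e \in R$, which is chosen with probability $d_e/d_R$. Given $e = \{u,v\}$ with $d_u \le d_v$, the iteration draws $w$ uniformly from $N(e) = N(u)$, a set of exactly $d_e$ vertices, and sets the indicator to $1$ precisely when $\{e,w\}$ is a triangle that \textsc{IsAssigned} assigns to $e$. The key observation is that the triangles assigned to $e$ inject into $N(e)$: every triangle through $e$ has its apex adjacent to $u$, hence lying in $N(u) = N(e)$, and distinct assigned triangles have distinct apexes. Thus exactly $\assign_e$ of the $d_e$ equally likely choices of $w$ produce a $1$, so $\Pr[Y_i = 1 \mid e] = \assign_e/d_e$. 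Summing over $e \in R$,
\[
\EX[Y_i] = \sum_{e \in R} \frac{d_e}{d_R}\cdot\frac{\assign_e}{d_e} = \frac{1}{d_R}\sum_{e \in R}\assign_e = \frac{\assign_R}{d_R},
\]
and since $Y_R = \frac{1}{\ell}\sum_{i=1}^\ell Y_i$ averages $\ell$ identically distributed copies, linearity of expectation gives $\EX[Y_R] = \assign_R/d_R$.

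For part (2), I would note that $Y_1,\dots,Y_\ell$ are mutually independent (each iteration uses fresh randomness) indicator variables, each with expectation $\mu := \assign_R/d_R$. Applying the Chernoff bound (\Cref{thm:chernoff}) with $r = \ell$ to $Y_R = \frac{1}{\ell}\sum_i Y_i$ yields
\[
\Pr\big[\,|Y_R - \mu| \ge \eps\mu\,\big] \le 2\exp\!\Big(-\tfrac{\eps^2}{3}\,\ell\,\mu\Big) = 2\exp\!\Big(-\ell\cdot\tfrac{\eps^2}{3}\cdot\tfrac{\assign_R}{d_R}\Big),
\]
which matches the claimed bound up to the leading factor $2$ (harmless for the later union bound, and removable via a one-sided estimate).

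The hard part will be establishing the combinatorial identity $\Pr[Y_i = 1 \mid e] = \assign_e/d_e$; the rest is bookkeeping plus a black-box Chernoff application. Its correctness rests on two design choices that must be invoked explicitly: that $N(e)$ is defined as the \emph{lower-degree} endpoint's neighborhood, which guarantees the apex of every triangle on $e$ lies in $N(e)$ and hence that $d_e = |N(e)|$ is the correct normalizing denominator; and that the (fixed) assignment makes ``assigned to $e$'' well defined, so that precisely $\assign_e$ of the $d_e$ candidate apexes are counted. Were $N(e)$ taken to be the higher-degree endpoint's neighborhood, some apexes completing triangles on $e$ could fall outside $N(e)$ and the clean identity would break, so this step is where the argument genuinely uses the structure of the algorithm.
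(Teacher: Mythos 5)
Your proposal is correct and follows essentially the same route as the paper's proof: condition on the sampled edge $e$ (drawn with probability $d_e/d_R$), establish $\Pr[Y_i=1\mid e]=\assign_e/d_e$, sum and apply linearity for part (1), then apply the Chernoff bound of \Cref{thm:chernoff} to the $\ell$ independent indicators for part (2). You actually justify the key identity $\Pr[Y_i=1\mid e]=\assign_e/d_e$ (via the injection of assigned triangles into $N(e)$, the lower-degree endpoint's neighborhood) more explicitly than the paper, which simply asserts it, and your observation about the leading factor of $2$ from the two-sided Chernoff bound is a fair, harmless discrepancy with the lemma as stated.
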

\begin{proof}
Let $e_i$ be the edge sampled in the $i$-th iteration of the for loop at ~\cref{line:loop} in ~\Cref{alg:triangle_estimate}. Then,
\begin{align*}
    \EE[Y_i=1] &= \sum_{e\in R} \Pr[e_i = e]\Pr [Y_i =1 | e_i = e] \,,  \\
    &= \sum_{e\in R} \frac{d_e}{d_R} \Pr [Y_i =1 | e_i = e] \,,  \\
    & = \sum_{e \in R} \frac{d_e}{d_R}\cdot  \frac{\assign_e}{d_e} \\
    &= \sum_{e \in R} \frac{\assign_e}{d_R}\\
    &= \frac{\assign_R}{d_R} \,.
\end{align*}
By linearity of expectation, we have the item (1) of the lemma.
For the second item, we apply Chernoff bound(~\Cref{thm:chernoff}).
\end{proof}
\begin{lemma}[Setting of $\ell$ for concentration of $Y_R$]
\label{lem:goodY}
Let $0 < \eps < 1/6$ and $c>20$ be some constants, and $\ell = \frac{c \log n}{\eps^2}\cdot \frac{m \cdot d_R}{r \cdot \total}$. Then, 
with probability at least $1- \frac{1}{5\log n}$, $|Y_R - \EX[Y_R]| \leq \eps \EX[Y_R] $.
\end{lemma}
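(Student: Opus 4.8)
The plan is to obtain this as a direct corollary of the Chernoff tail bound in item~(2) of \Cref{lem:triangle_est}, by substituting the prescribed value of $\ell$ and invoking the $\eps$-goodness of $R$. So first I would fix an $\eps$-good collection $R$ (here the randomness is entirely over the $\ell$ independent inner-loop samples, while $R$ is frozen). Since $|R| = r$, $\eps$-goodness (\Cref{def:good_R}) gives the deterministic lower bound $\assign_R \geq (1-\eps)\, r \cdot \total/m$. From \Cref{lem:triangle_est} I recall
\[
\Pr\left[\,|Y_R - \EX[Y_R]| \geq \eps\, \EX[Y_R]\,\right] \;\leq\; \exp\!\left(-\ell \cdot \tfrac{\eps^2}{3} \cdot \tfrac{\assign_R}{d_R}\right).
\]

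The key step is to plug in $\ell = \frac{c \log n}{\eps^2}\cdot \frac{m \cdot d_R}{r \cdot \total}$ and observe that the factor $d_R$ cancels exactly against the $\assign_R/d_R$ appearing in the mean, leaving
\[
\ell \cdot \tfrac{\eps^2}{3} \cdot \tfrac{\assign_R}{d_R} \;=\; \frac{c \log n}{3} \cdot \frac{m \cdot \assign_R}{r \cdot \total}.
\]
Applying the $\eps$-good lower bound $\assign_R \geq (1-\eps)\,r\,\total/m$ yields $\frac{m\,\assign_R}{r\,\total} \geq 1-\eps$, so the exponent is at least $\frac{c(1-\eps)}{3}\log n$. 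With $c > 20$ and $\eps < 1/6$ (hence $1-\eps > 5/6$) this coefficient exceeds $\frac{20\cdot(5/6)}{3} = \frac{50}{9} > 5$, so the failure probability is at most $\exp\!\left(-5\log n\right)$, which is polynomially small in $n$ and therefore comfortably below $\frac{1}{5\log n}$ for all $n \geq 2$.

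The argument is almost entirely mechanical, and I do not anticipate a genuine obstacle; the single idea worth flagging is that the setting of $\ell$ is tuned precisely so that $d_R$ drops out of the exponent, reducing the tail bound to a quantity controlled solely by the $\assign_R$ half of $\eps$-goodness. In particular this half is the only part of \Cref{def:good_R} that is needed here — the $d_R$ upper bound (the first good condition) plays no role in this lemma, and will instead be consumed later when the space usage is bounded. The remaining care is just bookkeeping on the constants to confirm the claimed slack of $\tfrac{1}{5\log n}$.
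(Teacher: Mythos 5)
Your computation is correct and follows essentially the same route as the paper: condition on $R$ being $\eps$-good, plug the prescribed $\ell$ into the tail bound of item (2) of Lemma 5.5 so that $d_R$ cancels, and use the lower half of the $\assign_R$ concentration to make the exponent at least $\tfrac{c(1-\eps)}{3}\log n > 5\log n$. The one thing you have left out is the last step: the lemma's probability is over the choice of $R$ as well as the $\ell$ inner samples, so after bounding the \emph{conditional} failure probability by $n^{-5}$ you must add back the probability (up to $\tfrac{1}{6\log n}$, from Lemma 5.3) that $R$ fails to be $\eps$-good. That is precisely why the target bound is the comparatively weak $\tfrac{1}{5\log n}$ rather than something polynomially small: the total failure probability is $\tfrac{1}{6\log n} + n^{-5} \leq \tfrac{1}{5\log n}$. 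As written, your proof establishes only the conditional statement; the fix is one line, but without it the claimed unconditional bound does not follow.
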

\begin{proof}
By~\Cref{lem:good_set}, $R$ is $\eps$-good with probability at least
$1-\frac{1}{6\log n}$. Condition on the event that $R$ is $\eps$-good. 
By definition of a $\eps$-good set, $\assign_R$ is tightly concentrated
around its mean: $\assign_R \in \left[ (1-\eps){r\total}/{m}~~,~~ (1+\eps) {r\total}/{m}\right] $
Then, by item 2. in~\cref{lem:triangle_est}, we have
\begin{align*}
    &\Pr \left[~|Y_R - \EX[Y_R]| \geq \eps \EX [Y_R]~ \right] \\
    & \leq \exp \left( - \frac{c \log n}{\eps^2}\cdot \frac{m  d_R}{r  \total} \cdot \frac{\eps^2}{3} \cdot \frac{\assign_R}{d_R}  \right) \\
    &\leq \exp \left( - \frac{c \log n}{3}\cdot \assign_R \cdot \frac{m}{r\total} 
     \right) \\
    &= o(1/n^3)\,,
\end{align*}
where the last line follows from the concentration of $\assign_R$ for
$\eps$-{good} $R$.
Removing the condition on $R$, we derive the lemma.
\end{proof}

We have now all the ingredients to prove that our
final estimate is indeed close to the actual triangle count.
The random variable $Y$ is scaled appropriately to
ensure that its expectation is close to the true triangle
count.

\begin{lemma}
\label{lem:tri} Assume $r$ and $\ell$ is set 
as in~\Cref{lem:good_set} and ~\Cref{lem:goodY} respectively.
Then, there exists a small constant $\eps'$ such that with probability at least $1-\frac{1}{3\log n}$, $X \in [(1-\eps' \nt), (1+\eps' \nt)]$. 
\end{lemma}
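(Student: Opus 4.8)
The plan is to combine the three preceding lemmas (\Cref{lem:good_set}, \Cref{lem:goodY}, and the expectation computation in \Cref{lem:triangle_est}) together with a union bound over the relevant bad events. The estimator is $X = \frac{m}{r} \cdot d_R \cdot Y$, so the strategy is to show that $X$ concentrates around $\total$, and then use the \emph{Almost All Assignment} property of the $(\eps,\delta)$-accurate \textsc{IsAssigned} procedure to conclude that $\total$ is itself a $(1\pm O(\eps))$-approximation to $\nt$. The chain of approximations is: $Y$ is close to $\EX[Y_R] = \assign_R/d_R$ (by \Cref{lem:goodY}), so $X = \frac{m}{r} d_R Y$ is close to $\frac{m}{r} \assign_R$; then, since $R$ is $\eps$-good, $\assign_R$ is close to $r\total/m$ (by \Cref{def:good_R}), so $\frac{m}{r}\assign_R$ is close to $\total$; finally $\total \ge (1-12\eps)\nt$ with high probability.

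First I would condition on the event that $R$ is $\eps$-good, which holds with probability at least $1 - \frac{1}{6\log n}$ by \Cref{lem:good_set}. Conditioned on this, \Cref{lem:goodY} gives that $|Y_R - \EX[Y_R]| \le \eps \EX[Y_R]$ with probability at least $1 - \frac{1}{5\log n}$, i.e.\ $Y \in [(1-\eps)\assign_R/d_R,\ (1+\eps)\assign_R/d_R]$. Substituting into $X = \frac{m}{r} d_R Y$, the $d_R$ factors cancel and I obtain
\begin{align*}
X \in \left[(1-\eps)\frac{m}{r}\assign_R,\ (1+\eps)\frac{m}{r}\assign_R\right].
\end{align*}
Next, using the second condition of $\eps$-goodness, $\assign_R \in [(1-\eps)r\total/m,\ (1+\eps)r\total/m]$, so multiplying the two intervals yields $X \in [(1-\eps)^2 \total,\ (1+\eps)^2\total]$. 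For $\eps < 1/6$ this is contained in $[(1-3\eps)\total,\ (1+3\eps)\total]$, using $(1-\eps)^2 \ge 1-2\eps \ge 1-3\eps$ and $(1+\eps)^2 \le 1+3\eps$ for small $\eps$.

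The final step is to replace $\total$ by $\nt$. By the \emph{Unique Assignment} property $\total \le \nt$, and by the \emph{Almost All Assignment} property $\total \ge (1-12\eps)\nt$ with probability at least $1-\delta = 1 - O(1/n^5)$. Combining, $X$ is sandwiched between $(1-3\eps)(1-12\eps)\nt$ and $(1+3\eps)\nt$, which lies in $[(1-\eps'\nt),(1+\eps'\nt)]$ for $\eps' = O(\eps)$ (for instance $\eps' = 16\eps$ suffices once the cross terms are bounded). Collecting the failure probabilities by a union bound over the three bad events --- $R$ not good, $Y$ not concentrated, and $\total < (1-12\eps)\nt$ --- gives total failure probability at most $\frac{1}{6\log n} + \frac{1}{5\log n} + O(1/n^5) \le \frac{1}{3\log n}$, which proves the lemma after renaming $\eps'$. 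I expect the only genuine subtlety to be the bookkeeping of the multiplicative error constants (ensuring the product of three $(1\pm O(\eps))$ factors stays within the claimed $(1\pm\eps')$ bound) and making sure the conditioning in \Cref{lem:goodY} is handled cleanly so that the probabilities multiply/add correctly rather than being double-counted; the probabilistic heavy lifting has already been done in the earlier lemmas, so this proof is essentially an assembly of those pieces.
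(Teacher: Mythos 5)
Your proposal is correct and follows essentially the same route as the paper's own proof: concentrate $Y_R$ around $\assign_R/d_R$ via \Cref{lem:goodY}, cancel the $d_R$ factor to place $X$ near $\frac{m}{r}\assign_R$, invoke $\eps$-goodness of $R$ to move to $\total$, and finally use the \emph{Almost All Assignment} and \emph{Unique Assignment} properties to replace $\total$ by $\nt$, with a union bound over the bad events. Your bookkeeping of the multiplicative constants and of the conditioning is in fact slightly more careful than the paper's.
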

\begin{proof}
With probability at least $1-\frac{1}{5\log n}$, $Y_R$ is closely
concentrated around its expected value $\assign_R/d_R$ (by~\Cref{lem:goodY}).
More formally,
\begin{align*}
    Y_R \in \left[(1-\eps) \frac{\assign_R}{d_R}, (1+\eps) \frac{\assign_R}{d_R} \right] \,.
\end{align*}
Then, with high probability
\begin{align*}
    X_R \in \left[(1-\eps)\cdot \frac{m}{r} \cdot {\assign_R}, (1+\eps) \cdot \frac{m}{r} \cdot {\assign_R} \right] \,.
\end{align*}
Since $R$ is good, with probability at least $1-\frac{1}{c\log n}$, 
\begin{align*}
    t_R \in \left[ (1-\eps)r\cdot \frac{\total}{m}, (1+\eps)r\cdot \frac{\total}{m},  \right]
\end{align*}
Then, with probability at least $1-\frac{2}{c\log n}$,
\begin{align*}
    X_R \in \left[ (1-2\eps) \total, (1+2\eps) \total  \right]
\end{align*}
Removing the conditioning on $R$, and using the bound on $\total$
as given in~\Cref{def:assign}, 
we have $$X \in \left[ (1-\eps') \nt, (1+\eps') \nt  \right]$$ with probability at least $1-\frac{4}{c\log n}$, for suitable chosen 
parameter $\eps'$.
\end{proof}

This completes the first part of the analysis.
We now focus on the space complexity of~\Cref{alg:triangle_estimate}.
\begin{lemma}[Space Complexity of~\Cref{alg:triangle_estimate}]
\label{lem:space}
Assuming an access to a $(\eps,\delta)$-accurate \textsc{IsAssigned} method,
~\Cref{alg:triangle_estimate} requires $\tO(m\degen/{\nt})$ bits of storage
in expectation.
\end{lemma}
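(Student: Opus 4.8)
The plan is to account separately for the three sources of storage in \Cref{alg:triangle_estimate} and to show that each is $\tO(m\degen/\nt)$ in expectation: (i) the sampled edge set $R$ together with the degrees $\{d_e\}_{e\in R}$ computed in Pass~2; (ii) the per-iteration state of the loop at \Cref{line:loop} (the $\ell$ sampled edges, their random neighbors $w$, and the running sum defining $Y$); and (iii) the working memory of the \textsc{IsAssigned} subroutine. Components (i) and (ii) cost $\tO(r)$ and $\tO(\ell)$ respectively (each stored object is an edge or vertex of $O(\log n)$ bits, which $\tO$ absorbs), so the whole argument reduces to bounding $r$ and $\EX[\ell]$ and then invoking the space guarantee of \textsc{IsAssigned}.

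First I would bound $r$. Since $r = \frac{c\log n}{\eps^2}\cdot\frac{m\,\assign_{\max}}{\total}$ as set in \Cref{lem:good_set}, I substitute the two high-probability guarantees of an $(\eps,\delta)$-accurate procedure from \Cref{def:assign}: the \emph{Bounded Assignment} property gives $\assign_{\max}\le \degen/\eps$, and the \emph{Almost All Assignment} property gives $\total\ge(1-12\eps)\nt$. Together these yield $r \le \frac{c\log n}{\eps^3}\cdot\frac{m\degen}{(1-12\eps)\nt} = \tO(m\degen/\nt)$, so storing $R$ and its degrees costs $\tO(m\degen/\nt)$.

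Next I would control $\EX[\ell]$. The parameter $\ell = \frac{c\log n}{\eps^2}\cdot\frac{m\,d_R}{r\,\total}$ from \Cref{lem:goodY} depends on the random quantity $d_R$ (fixed only after Pass~2), so I take the expectation over the draw of $R$. Using $\EX[d_R] = r\cdot d_E/m$ from \Cref{lem:set_R}, the factors of $r$ and $m$ cancel and $\EX[\ell] = \frac{c\log n}{\eps^2}\cdot\frac{d_E}{\total}$. Applying the Chiba--Nishizeki bound $d_E\le 2m\degen$ from \Cref{lem:chiba} together with $\total\ge(1-12\eps)\nt$ gives $\EX[\ell] = \tO(m\degen/\nt)$, so the expected loop state stays within budget.

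The delicate point --- and the step I expect to be the main obstacle --- is component (iii). All $\ell$ invocations of \textsc{IsAssigned} must be resolved within the same later passes of the single six-pass run, so a naive per-call accounting would multiply the $\tO(m\degen/\nt)$ bound of the \emph{Bounded Space Complexity} property by $\ell$, blowing the total up to $\tO\big((m\degen/\nt)^2\big)$. The resolution I would argue is that \textsc{IsAssigned} is implemented around a single shared data structure (the global edge/vertex sample used to decide whether a queried edge participates in $\Omega(\degen)$ triangles, deferred to the next section) that is populated once per pass and reused by every query; hence the space guarantee of \Cref{def:assign} is a one-time cost rather than a per-call cost. Granting this, the three components sum to $\tO(m\degen/\nt)$ in expectation, and the expected-to-worst-case conversion via Markov's inequality noted in the preliminaries then yields the claimed worst-case storage guarantee.
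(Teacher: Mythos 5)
Your accounting for components (i) and (ii) matches the paper's: bounding $r$ via the \emph{Bounded Assignment} and \emph{Almost All Assignment} properties, and bounding $\ell$ via $d_R$ and the Chiba--Nishizeki bound $d_E \le 2m\degen$, both give $\tO(m\degen/\nt)$ as in the paper (the paper uses the high-probability bound on $d_R$ from the $\eps$-goodness of $R$ rather than your expectation argument, but that is an immaterial difference).

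The gap is in component (iii). You assume that all $\ell$ loop iterations invoke \textsc{IsAssigned} and then escape the resulting $\tO\big((m\degen/\nt)^2\big)$ blow-up by positing that the procedure is ``implemented around a single shared data structure... populated once per pass and reused by every query.'' That claim is unsupported and does not match the actual implementation: each call to \textsc{Assignment} draws its own $s = \tO(m\degen/\nt)$ fresh samples from $N(e)$ for each of the three edges of the queried triangle, so simultaneous calls genuinely add up in space, and the \emph{Bounded Space Complexity} property of \Cref{def:assign} is explicitly a per-call guarantee. The observation you are missing is that \textsc{IsAssigned} is invoked only when the sampled pair $\{e,w\}$ actually forms a triangle (the condition at \Cref{line:if}), not in every iteration. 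Conditioned on $R$, this happens in iteration $i$ with probability $\sum_{e\in R}\frac{d_e}{d_R}\cdot\frac{t_e}{d_e} = \frac{t_R}{d_R}$, so the expected number of calls is $\ell\cdot t_R/d_R = \tO\big(\frac{m\,t_R}{r\,\total}\big)$; since $\EE[t_R] = 3r\nt/m$, Markov's inequality gives $t_R = O(r\total/m)$ with constant probability, making the expected number of calls $\tO(1)$. With only $\tO(1)$ calls in expectation, the naive per-call accounting you feared is exactly what works, yielding total space $\tO(1)\cdot\tO(m\degen/\nt) = \tO(m\degen/\nt)$. Note also that $t_R$ (triangles \emph{incident} to $R$) must be distinguished from $\assign_R$ (triangles \emph{assigned} to $R$) in this step, which is why the separate Markov bound on $t_R$ is needed.
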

\begin{proof}
Clearly, $O(r+\ell)$ space is sufficient to 
store the set $R$ and sample $\ell$ many edges from it at
~\Cref{line:loop}. Recall from~\Cref{lem:good_set,lem:goodY}
that $r = \tO(m\assign_{\max}/ {\total})$ and $\ell = \tO(md_R/(r\total))$,
respectively. Using the bound on $\total$ and $\assign_{\max}$
from the definition of $(\eps,\delta)$-accurate \textsc{IsAssigned}
method, we derive that $O(r+\ell)$ is $\tO(m\degen/\nt)$ with high probability.
 
We now account for the space complexity of the \textsc{IsAssigned} method. 
It is called if the the edge-vertex pair
$\{e,w\}$ forms a triangle (if condition at~\Cref{line:if}).
Let $Z_i$ be an indicator random variable to denote if \textsc{IsAssigned} 
is called during the $i$-th iteration of the
for loop at~\Cref{line:loop}. Then,
$$ \Pr[Z_i=1 |R] = \sum_{e \in R} \frac{d_e}{d_R} \cdot \frac{t_e}{d_e}  = \frac{t_R}{d_R} \,.$$
Then, the expected number of calls to \textsc{IsAssigned} is bounded
by $\ell \cdot \frac{t_R}{d_R} = \tO\left( \frac{m~t_R}{r~\total} \right)$.
Note that $t_R$ can be much different from $\assign_R$,
as it counts the exact number of triangles per edge. However, 
we show that with constant probability, $t_R$ is at most $O(r\total/m)$, which
bounds the expected number of calls by $\tO(1)$. Since each call to \textsc{IsAssigned}
takes $\tO(m\degen/\nt)$ bits of space, the lemma follows. We now bound $t_R$.
$$ \EE [t_R] = r \sum_{e \in E} \frac{t_e}{m} = \frac{3r\nt}{m} $$,
where the last equality follows from the fact that $\sum_{e\in E} t_e = 3\nt$.
An application of Markov inequality bounds the probability that $t_R$ is
more than $\frac{cr\total}{m}$ by a small constant probability, for any large constant $c>10$.
\end{proof}

Thus, assuming an access to a $(\eps,o(1/n^5))$-accurate \textsc{IsAssigned}
method,~\Cref{lem:tri,lem:space} together prove our
main result in~\Cref{thm:main2}.

\subsection{Assigning triangles to edges}
% In the previous section, we demonstrated that
% the space complexity of the triangle counting 
% problem is dictated by the edges that take part
% in large number of triangles. To reduce the 
% variance of the estimation, it is crucial to 
% ensure we do not charge too many triangles
% on each edge. A particular popular assignment
% rule is based on degree ordering of the vertices:
% map every triangle to the edge between two
% lower degree end-points (breaking ties arbitrarily,
% but consistently)~\cite{}. While such methods are
% effective in reducing the variance of the estimator,
% they do not exploit bounded degeneracy properties
% of the graph. In this section, we carefully
% design an assignment rule that exploits the 
% low degeneracy of the graph and results in an
% $O(m\degen)/{\nt}$-estimator for the triangle counting
% problem.

% Let $\nt_3$ denote the set of triangles in $G$.
% We define a partial function $\pi: \mathcal{T} \rightarrow 
% E$, that maps a triangle in $\mathcal{T}$ to an edge $E$.
% Note that $\pi$ is a partial function
% and will not map some triangles to any edges. 
% Let $t_{e,\pi}$ be the number of triangles
% that are mapped to the edge $e$ by $\pi$: $t_{e,\pi} = |\{\tau \in \mathcal{T}: \pi(\tau) = e\}|$.
% In defining the partial function $\pi$,
% we aim to satisfy the following two properties for some
% suitably chosen small constant $\eps$:
% \begin{itemize}
%     \item For every edge $e$, $t_{e,\pi} \leq  \degen / \eps$. 
%     \item Almost all the triangles are associated with some edges: $\sum_{e\in E} t_{e,\pi} \geq (1-2\eps){\nt}$. 
% \end{itemize}

In this section we give an algorithm for the \textsc{IsAssigned}
procedure in~\Cref{alg:IsAssigned}. Recall from the previous section
that we require \textsc{IsAssigned} to be $(\eps,\delta)$-accurate (see~\Cref{def:assign}).

The broad idea is to assign a triangle to the edge with smallest $t_e$.
Recall that $t_e$ is the number of triangles that the edge $e$ participates in.
However, computing $t_e$ might be too expensive in terms
of space required for certain edges. As evident from the analysis of~\Cref{alg:triangle_estimate},
we have a budget of $\tO(m\degen/{\nt})$ in terms of bits of storage for 
each call to \textsc{IsAssigned}. In this regard, we define ``heavy''
and ``costly'' edges and it naturally leads to a notion of ''heavy'' and ''costly'' triangles. If a triangle is either ``heavy'' or 
``costly'', then we do not attempt to assign it to any of its edges. 
Crucially, we show that
the total number of ``heavy'' and ``costly'' triangles are only a tiny fraction
of the total number of triangles in the graph. 
% Furthermore, if a triangle
% is neither ``heavy'' nor ``costly'', then we can estimate $t_e$ for each of its
% edges and assign is to the smallest one. We make these notions
% rigours in the following.

We need to ensure that for any edge $e$, not too many triangles are
assigned to $e$ by \textsc{IsAssigned}. We achieve this by simply
disregarding any edge with large $t_e$ from consideration while assigning 
a triangle to an edge. Formally we capture this by defining {\em heavy} edges and
triangles.
\begin{definition}[{\em $\eps$-heavy edge} and {\em $\eps$-heavy triangle }]
\label{def:heavy}
An edge is defined $\eps$-{\em heavy} if $t_e > \degen/\eps$.
A triangle is deemed $\eps$-{\em heavy} if {\em all} the three of 
its edges are $\eps$-heavy.
\end{definition}

If the ratio $t_e/d_e$ is quite small for an edge $e$, then we need too many samples from the neighborhood $N(e)$ to estimate $t_e$. Roughly speaking,
$O(d_e/t_e)$ many samples are required for an {\em accurate} estimation. In this regard, we define 
{\em costly edges} and {\em costly triangles} as follows.
\begin{definition}[{\em $\eps$-costly edge} and {\em $\eps$-costly triangle}]
\label{def:costly}
An edge $e$ is defined {\em $\eps$-costly} if $d_e/t_e > m\degen /(\eps{\nt})$. A triangle is deemed {\em $\eps$-costly} if {\em any} of
its three edges is {\em $\eps$-costly}.
\end{definition}

We first show that the number {\em heavy triangles} and 
{\em costly triangles} are only a small fraction of the
all triangles. Formally, we  prove the following lemma.

\begin{lemma}
\label{lem:heavy_costly_triangle}
The number of $\eps$-{\em heavy} triangles and $\eps$-{\em costly} triangles are bounded by  $2\eps \nt$ and $\eps \nt$ respectively. 
\end{lemma}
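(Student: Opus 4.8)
The plan is to prove the two bounds by separate counting arguments, reflecting the different reasons the two triangle types are rare: $\eps$-heavy triangles are rare because heavy edges are few and can only span so many triangles among themselves, whereas $\eps$-costly triangles are rare because each costly edge, by definition, lies on very few triangles.

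\emph{Heavy triangles.} Let $H = \{e \in E : t_e > \degen/\eps\}$ denote the set of $\eps$-heavy edges. Since every triangle is counted once in $t_e$ for each of its three edges, $\sum_{e \in E} t_e = 3\nt$, and hence the number of heavy edges satisfies $|H| \le 3\nt/(\degen/\eps) = 3\eps\nt/\degen$. The crucial observation is that a triangle is $\eps$-heavy precisely when all three of its edges lie in $H$, i.e.\ when it is a triangle of the subgraph $G_H = (V,H)$; so it suffices to bound the number of triangles in $G_H$. Degeneracy is monotone under subgraphs, so $\degen(G_H) \le \degen$. Recalling that the number of triangles through any edge $e$ is at most $d_e$, every graph has at most $\tfrac13\sum_e d_e$ triangles; applying this to $G_H$ together with \Cref{lem:chiba} gives at most $\tfrac13 \cdot 2|H|\degen \le \tfrac23\cdot\tfrac{3\eps\nt}{\degen}\cdot\degen = 2\eps\nt$ heavy triangles.

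\emph{Costly triangles.} Let $C = \{e \in E : d_e/t_e > m\degen/(\eps\nt)\}$ be the $\eps$-costly edges, so that $t_e < \eps\nt\, d_e/(m\degen)$ for every $e \in C$. Each $\eps$-costly triangle contains at least one edge of $C$, so the number of costly triangles is at most $\sum_{e \in C} t_e$ (a triangle with more than one costly edge is merely overcounted). Substituting the defining inequality and then \Cref{lem:chiba},
\begin{align*}
\sum_{e \in C} t_e \;<\; \frac{\eps\nt}{m\degen}\sum_{e \in C} d_e \;\le\; \frac{\eps\nt}{m\degen}\, d_E \;\le\; \frac{\eps\nt}{m\degen}\cdot 2m\degen \;=\; 2\eps\nt .
\end{align*}
This is $O(\eps\nt)$, matching the stated bound up to the constant factor $2$ inherited from \Cref{lem:chiba}, which is harmless downstream, where $\eps$ enters only through expressions such as the $1-12\eps$ in \Cref{def:assign}.

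The main obstacle is the heavy case. The naive route---bounding the triangles of $G_H$ by \Cref{lem:max_tri} as $2|H|\degen$---loses a factor of $3$ and yields only $6\eps\nt$. The fix is to count through the sharper relation that each edge lies on at most $d_e$ triangles, so that $G_H$ has at most $\tfrac13 d_{E(G_H)}$ triangles; this factor-$3$ saving is exactly what brings the bound down to $2\eps\nt$. The costly case, in contrast, is a one-line union bound followed by the substitution above.
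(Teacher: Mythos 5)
Your proof is correct and follows essentially the same route as the paper: the costly bound is the identical computation ($\sum_{e \text{ costly}} t_e < \frac{\eps \nt}{m\degen}\, d_E \le 2\eps\nt$ via \Cref{lem:chiba}), and the heavy bound likewise counts the heavy edges and then bounds the triangles of the subgraph they span using monotonicity of degeneracy under subgraphs. You are in fact more careful with constants than the paper's own proof: you correctly obtain $|H|\le 3\eps\nt/\degen$ from $\sum_e t_e = 3\nt$ (the paper states $\eps\nt/\degen$) and compensate with the sharper bound $\tfrac13 d_{E(G_H)}$ on the triangle count of $G_H$ (versus the $2|E(G_H)|\degen$ of \Cref{lem:max_tri}), which legitimately yields $2\eps\nt$; and your observation that the costly count comes out as $2\eps\nt$ rather than the stated $\eps\nt$ matches what the paper's own proof actually derives --- in both cases the discrepancy is a harmless constant factor in $\eps$.
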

\begin{proof}
We begin the proof by first showing that
the number of {costly triangles} is bounded.
To prove this, observe that for a {\em costly edge} $e$,
$t_e < d_e \cdot {(\eps \nt/m\degen)}$.
Then,
$$ \sum_{e \text{ is costly}} t_e < \frac{\eps \nt}{m \degen} \sum_{e \text{ is costly}} d_e < \frac{\eps \nt}{m \degen} \cdot d_E = 2\eps \nt$$,
where the last inequality follows from~\cref{lem:chiba}.

We now turn our attention to bounding the number of {\em 
heavy triangles}. By a simple counting argument,
the number of {\em heavy edges} in $G$ is at most
$\eps \nt / \degen$. Consider the subgraph of $G$ induced
by the set of {\em heavy edges}, denoted as $G_{\heavy}$. 
It follows from the definition of degeneracy that
$\degen_{G_{\heavy}} \leq \degen_G$.
By~\Cref{lem:max_tri}, the number of triangles in $G_{\heavy}$ is then at most $\degen_{G_{\heavy}} \cdot E(G_{\heavy}) = \eps \nt$.
Since any {\em heavy triangle} in $G$ is present
in $G_{\heavy}$, the lemma follows.
\end{proof}

% We are now ready to describe the partial function $\pi$.
% We present it at~\cref{proc:pi} in~\cref{alg:IsAssigned}.
% We give an informal description of the 
% function below.
% Assume $\tau=\{e_1,e_2,e_3\}$ be a triangle. If all three edges are {\em heavy edges}, then 
% $\tau$ is not mapped by $\pi$.
% Otherwise, $\pi$ maps $\tau$ to the edge with 
% $t_e \leq  \degen/ \eps$. If more than one edge satisfies
% the criteria, then chose the lexicographically smallest one.

We give the details of the procedure in~\Cref{alg:IsAssigned}.
The technical part of this method is handled
by \textsc{Assignment} subroutine at~\Cref{line:assgn}.
Given a triangle $\tau$, \textsc{Assignment} either 
returns $\perp$ ($\tau$ is not assigned to any edges)
or returns an edge $e$.
We remark here that the method \textsc{Assignment} as described is randomized and may return different $e$ on different invocations.
To ensure that every triangle is assigned to an unique
edge, as demanded in the item (1) of~\Cref{def:assign},
we maintain a table of (key,value) pairs that maps triangles (key)
to edges or $\perp$ symbol(value). 
In particular, when \textsc{Assignment} is invoked with input $\tau$,
we first look up in the table to check if there is an entry
for the triangle $\tau$. If it is there, then we simply return
the corresponding value from the table. 
Otherwise, we execute \textsc{Assignment} with input $\tau$;
create an entry for $\tau$ and
and store the return value together with $\tau$ in the table.
Since the expected number of calls to the
\textsc{IsAssigned} routine is bounded by $\tO(1)$ (by~\Cref{lem:space}), 
this only adds a constant space overhead.

\begin{algorithm}[!ht]
\caption{Detecting Edge-Triangle Association}
\label{alg:IsAssigned}
    \begin{algorithmic}[1]
    \Procedure{IsAssigned} {triangle $\tau=\{e_1,e_2,e_3\}$, edge $e$}
    \State Let $e_{\min} = $ \textsc{Assignment}$(\tau)$
    \If {$e_{\min} = \perp$ or  $e_{\min}\neq e$} 
    \State return NO.
    \Else 
    \State return YES.
    \EndIf
    \EndProcedure
    \Procedure{Assignment} {triangle $\tau=\{e_1,e_2,e_3\}$}
    \label{line:assgn}
    \For {each edge $e \in \tau$}
    \If {$d_{e} > \frac{m\degen^2}{\eps^2\nt}$} \label{line:costly}
    \State $Y_{e} = \infty$ 
    \Else     
        \For {$j=1$ to $s$}
            \State sample $w$ from $N(e)$ u.a.r.
            \State If $\{e,w\}$ forms a triangle, set $Y_j=1$; \State Else set $Y_j=0$.
        \EndFor
        \State Let $Y_{e} = \frac{d_{e}}{s} \sum_{j=1}^{\ell} Y_j$. 
        \EndIf
    \EndFor
    \State Let $e_{\min} =\argmin_{e} Y_{e}$.
    \If {$Y_{e_{\min}} > \degen / (2\eps)$} 
    \State return $\perp$.
    \Else 
    \State return $e_{\min}$.
    \EndIf
    \EndProcedure
    \end{algorithmic}
\end{algorithm}
We next analyze~\Cref{alg:IsAssigned}. The following theorem 
captures the theoretical guarantees of \textsc{IsAssigned} procedure.
\begin{theorem}
\label{thm:assgn}
Let $\eps>0$ and $c>60$ be some positive constants
and $s=\frac{c\log n}{\eps^2} \cdot \frac{m\degen}{\nt}$.
Then, \Cref{alg:IsAssigned} leads to an $(\eps,o(1/n^5))$-accurate
\textsc{IsAssigned} procedure.
\end{theorem}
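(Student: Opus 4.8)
The plan is to verify the four requirements of \Cref{def:assign} for \Cref{alg:IsAssigned} with the stated $s=\frac{c\log n}{\eps^2}\cdot\frac{m\degen}{\nt}$. Two of them are essentially immediate. \emph{Unique assignment} holds by construction: the memoization table forces every triangle $\tau$ to keep the single value returned by the first call to \textsc{Assignment}$(\tau)$, and that value is either $\perp$ or one edge $e_{\min}$, so $\total\le\nt$. \emph{Bounded space} is also direct: a call stores at most $3s=\tO(m\degen/\nt)$ sampled neighbours (one batch per edge of $\tau$) plus $O(1)$-size counters, and the table adds only $O(1)$ amortized overhead since the expected number of calls is $\tO(1)$ by~\Cref{lem:space}. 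The content lies in properties~2 and~3, which I would both deduce from a single ``good estimation'' event. For a below-cutoff edge, a uniform $w\in N(e)$ closes a triangle with probability exactly $t_e/d_e$, so $Y_e=\frac{d_e}{s}\sum_j Y_j$ has $\EE[Y_e]=t_e$; for a \emph{non-costly} edge we have $t_e/d_e\ge \eps\nt/(m\degen)$, hence $s\cdot(t_e/d_e)\ge (c\log n)/\eps$, and the Chernoff bound (\Cref{thm:chernoff}) gives $Y_e\in[(1-\eps)t_e,(1+\eps)t_e]$ except with probability $n^{-\Omega(c\eps)}$. Taking $c$ large and union-bounding over the at most $3\nt$ edge-estimations the procedure can ever make yields a global event $\mathcal{E}$ with $\Pr[\neg\mathcal{E}]=o(1/n^5)$, on which every estimate of a non-costly below-cutoff edge is $(1\pm\eps)$-accurate. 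All remaining steps are argued deterministically on $\mathcal{E}$.

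The organizing identity is that the degree cutoff satisfies $\frac{m\degen^2}{\eps^2\nt}=\frac{\degen}{\eps}\cdot\frac{m\degen}{\eps\nt}$, i.e.\ it is the product of the heavy threshold and the costly threshold. For \emph{bounded assignment} I would fix an edge $e$ with $t_e>\degen/\eps$ and show it is never selected. If $d_e$ exceeds the cutoff then $Y_e=\infty$ and $e$ is never the $\argmin$. Otherwise $d_e\le \frac{m\degen^2}{\eps^2\nt}$, which together with $t_e>\degen/\eps$ forces $d_e/t_e< m\degen/(\eps\nt)$, so $e$ is non-costly; then on $\mathcal{E}$ we get $Y_e\ge(1-\eps)t_e>\degen/(2\eps)$, so \textsc{Assignment} would return $\perp$ rather than pick $e$. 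Either way no triangle is assigned to such an $e$, and since $\assign_e\le t_e$ always, this gives $\assign_{\max}\le \degen/\eps$.

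For \emph{almost-all assignment} I would bound the triangles returning $\perp$, i.e.\ those whose three edges all have $Y_e>\degen/(2\eps)$. Costly triangles number at most $\eps\nt$ by~\Cref{lem:heavy_costly_triangle}. For a non-costly unassigned triangle each edge is non-costly, and $Y_e>\degen/(2\eps)$ forces $t_e>\frac{\degen}{2\eps(1+\eps)}$ (by accuracy on $\mathcal{E}$ if $e$ is below the cutoff, and because an above-cutoff non-costly edge already has $t_e>\degen/\eps$). Thus every such triangle has all three edges above the threshold $\theta:=\frac{\degen}{2\eps(1+\eps)}$. The degeneracy counting argument of~\Cref{lem:heavy_costly_triangle} then applies verbatim: the number of edges with $t_e>\theta$ is $O(\eps\nt/\degen)$, so by~\Cref{lem:max_tri} the subgraph they span holds $O(\eps\nt)$ triangles. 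Summing the costly and the semi-heavy contributions and absorbing the constants into the factor ``$12$'' yields $\total\ge(1-12\eps)\nt$ on $\mathcal{E}$, which establishes property~2.

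The hard part will be making properties~2 and~3 hold \emph{simultaneously} through the choice of the three thresholds: the procedure can only estimate $t_e$ reliably for non-costly, low-degree edges, yet it must both never assign to a genuinely heavy edge and leave essentially no non-costly triangle unassigned. The whole scheme hinges on the identity ``degree-cutoff $=$ heavy-threshold $\times$ costly-threshold'', which is exactly what guarantees that a low-degree edge with large $t_e$ is automatically non-costly and hence accurately estimated. Lining up the slack factors — the factor $2$ between $\degen/(2\eps)$ and $\degen/\eps$, and the $(1\pm\eps)$ estimation error — so that both directions go through, while keeping the union-bound failure at $o(1/n^5)$, is the delicate point; everything else is bookkeeping already provided by~\Cref{lem:chiba},~\Cref{lem:max_tri}, and~\Cref{lem:heavy_costly_triangle}.
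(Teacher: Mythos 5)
Your overall plan matches the paper's proof: the memoization table gives unique assignment, the setting of $s$ gives the space bound, a Chernoff bound on the estimator $Y_e$ (valid exactly when $e$ is non-costly and below the degree cutoff) drives both the bounded-assignment and almost-all-assignment properties, and the counting is delegated to \Cref{lem:heavy_costly_triangle} and \Cref{lem:max_tri}. Your case split by the degree cutoff $m\degen^2/(\eps^2\nt)$ is equivalent to the paper's split by costliness in \Cref{lem:heavyEdge} (non-costly $\Rightarrow$ Chernoff; costly and heavy $\Rightarrow$ above the cutoff $\Rightarrow$ $Y_e=\infty$), and your observation that the cutoff equals the product of the heavy and costly thresholds is precisely the identity the paper exploits. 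Your treatment of property~(2) via one global good event and a threshold $\theta$ is a mild repackaging of \Cref{lem:assigned_triangle}, which works per-triangle with the $4\eps$-heavy/$4\eps$-costly definitions; both reduce to the same counting, and the final constant is just ``$O(\eps)$'' absorbed by rescaling, as you note.

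The one step that does not go through as written is the probability of your good event $\mathcal{E}$. Demanding $(1\pm\eps)$-accuracy of $Y_e$ puts an $\eps^2$ in the Chernoff exponent: for a non-costly, below-cutoff edge one has $s\, t_e/d_e \ge (c\log n)/\eps$, so the failure probability is at most $\exp\bigl(-(\eps^2/3)\cdot (c\log n)/\eps\bigr) = n^{-c\eps/3}$, which is \emph{not} $o(1/n^5)$ for small constant $\eps$ under the stated hypothesis $c>60$ (you would need $c$ on the order of $1/\eps$, and you cannot ``take $c$ large'' since it is fixed by the theorem). The fix is the one the paper uses and is already implicit in your own argument: every downstream use of $\mathcal{E}$ needs only factor-$2$ accuracy --- $Y_e \ge t_e/2$ for bounded assignment, and $Y_e \le 2t_e$ for almost-all assignment, which replaces your $\theta$ by $\degen/(4\eps)$, i.e., the paper's $4\eps$-heavy threshold. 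With deviation $1/2$ the exponent becomes $(1/12)\cdot(c\log n)/\eps \ge (c/12)\log n$, which is small enough for $c>60$ and leaves room for the union bound over the edges being estimated. With that substitution your proof is complete and coincides with the paper's.
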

In the remaining part of this section, we prove the above theorem. We have already discussed how to ensure \textsc{IsAssigned} satisfies item (1) in~\Cref{def:assign}.
We next take up item (3). We show that not
too many triangles are assigned to any fixed edge.
In particular, we show that if an edge is ``heavy'',
then with high probability no triangles are assigned to it. 
In other words, if an edge $e$ is assigned a triangle
by \textsc{Assigned}, then with high probability $t_e \leq \degen/\eps$. It follows then, that for any edge $e$, the number of triangles that are assigned to $e$, denoted as $\tau_e$, is at most $\degen/\eps$ with high probability.
\begin{lemma}
\label{lem:heavyEdge}
Let $e$ be an $\eps$-heavy edge. Then with probability at least
$1- \frac{1}{n^5}$, no triangles are assigned to $e$.
\end{lemma}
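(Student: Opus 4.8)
The plan is to reduce the event ``some triangle is assigned to $e$'' to a one-sided concentration statement about the estimator $Y_e$. Recall that \textsc{Assignment} returns $e$ for a triangle $\tau \ni e$ only when $e = e_{\min}$ and $Y_{e_{\min}} \le \degen/(2\eps)$; since $Y_e \ge Y_{e_{\min}}$ whenever $e=e_{\min}$, a \emph{necessary} condition for $e$ to be assigned $\tau$ is simply $Y_e \le \degen/(2\eps)$. So it suffices to control, across the triangles through $e$, the probability that the estimate $Y_e$ computed in that call drops below $\degen/(2\eps)$.

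First I would dispose of the large-degree case deterministically. If $d_e > m\degen^2/(\eps^2\nt)$, the test on \Cref{line:costly} sets $Y_e = \infty$, so $e$ can never be a finite below-threshold minimizer and is assigned nothing with probability $1$. In the remaining case $d_e \le m\degen^2/(\eps^2\nt)$, the value $Y_e = \frac{d_e}{s}\sum_{j=1}^{s} Y_j$ is an average of $s$ i.i.d.\ indicators, each equal to $1$ exactly when a uniform $w\in N(e)$ closes a triangle, i.e.\ with probability $t_e/d_e$; hence $\EE[Y_e]=t_e$. Because $e$ is $\eps$-heavy we have $t_e > \degen/\eps$, so $\degen/(2\eps) < t_e/2$, and the bad event $\{Y_e \le \degen/(2\eps)\}$ is contained in a lower-tail deviation of $Y_e$ below half its mean. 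Applying Theorem~\ref{thm:chernoff} with deviation parameter $1/2$, $\mu = t_e/d_e$, and $r=s$ bounds the probability by $2\exp(-s(t_e/d_e)/12)$.

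The crux is showing this exponent is $\Omega(c\log n)$, and this is precisely where the costly cutoff is calibrated to help: plugging in $d_e \le m\degen^2/(\eps^2\nt)$ together with $s = \frac{c\log n}{\eps^2}\cdot\frac{m\degen}{\nt}$ gives $s\,t_e/d_e \ge c\log n\,(t_e/\degen) > c\log n/\eps$, where the last inequality again uses heaviness. Thus a single estimate of $Y_e$ lies below the threshold with probability $n^{-\Omega(c)}$, independently of the actual value of $d_e$. Finally, since each distinct triangle through $e$ triggers a fresh, independent estimate of $Y_e$ (the memoization table merely fixes the outcome of the first call per triangle), I would union bound over the at most $n-2$ triangles containing $e$; taking $c$ sufficiently large (the regime $c>60$) drives the total failure probability below $1/n^5$.

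I expect the union-bound step to be the point requiring the most care. The estimator $Y_e$ is resampled for every distinct triangle, so the event ``$e$ is assigned some triangle'' is genuinely a union over all $t_e \le n-2$ triangles through $e$, and the per-call bound must carry a spare factor of $n$ to absorb it. By contrast, the deterministic large-degree case and the unbiasedness computation are routine; the real content of the lemma is that the threshold $m\degen^2/(\eps^2\nt)$ defining costly edges is set exactly so that the Chernoff exponent $s\,t_e/d_e$ scales like $c\log n$ uniformly over all non-costly heavy edges.
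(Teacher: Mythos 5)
Your proof is correct and follows essentially the same route as the paper's: dispose of large-degree edges deterministically via the cutoff on \Cref{line:costly}, and for the rest apply a Chernoff lower-tail bound to $Y_e$, whose mean $t_e > \degen/\eps$ is twice the acceptance threshold, with the degree cutoff (equivalently, non-costliness) guaranteeing an exponent of order $c\log n$. The only differences are cosmetic and in your favor: you branch on the algorithm's own degree test rather than on costliness (the paper's costly case reduces to your deterministic case via heaviness anyway), and you make explicit the union bound over the at most $n-2$ triangles through $e$, which the paper's proof leaves implicit.
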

\begin{proof}
First assume $e$ is not an 
$\eps$-costly edge. Let $\tau$ be some triangle that 
$e$ participates in. We consider an execution of 
\textsc{Assignment} on input $\tau$. Clearly, $\Pr[Y_j=1]=t_e/d_e$.
By linearity of expectation,
$\EE[Y_e] = t_e\,.$
An application of Chernoff bound(~\Cref{thm:chernoff})
yields 
\begin{align*}
\Pr[Y_e < \degen / (2\eps) ] &\leq  \Pr[Y_e < t_e/2 ] \\
&\leq \exp\left(-\frac{1}{12}\cdot \frac{sd_e}{t_e} \right) \\
&\leq \exp\left(-\frac{c\log n}{\eps^2}\cdot \frac{st_e}{d_e} \right) \\
& \leq \frac{1}{n^{5}}
\end{align*}
where the last inequality uses the fact that 
$e$ is not $\eps$-costly and hence $d_e/t_e \leq m\degen / (\eps \nt)$.

Next assume $e$ is an $\eps$-costly edge. Since $t_e > \degen/ \eps$, it follows that $d_e > \frac{m\degen^2}{\eps^2 \nt}$.
Then, the if condition on~\Cref{line:costly} is true
and hence $Y_e=\infty$. So no triangles that $e$ participates
in, will be assigned to it.
\end{proof}

We now consider item (2) in~\Cref{def:assign}.
Let $\tau$ be a triangle such that $\tau$ is  
neither $\eps$-heavy nor $\eps$-costly. We prove that,
with high probability, \textsc{Assigned} does not return $\perp$ when invoked with $\tau$. By~\Cref{lem:heavy_costly_triangle},
this implies that $\total \geq (1-3\eps)\nt$.
\begin{lemma}
\label{lem:assigned_triangle}
Let $\tau$ be a triangle that is neither $4\eps$-heavy nor $4\eps$-costly. Then with probability at least $1-o(1/n^5)$, \textsc{Assigned} $(\tau) \neq \perp$.
\end{lemma}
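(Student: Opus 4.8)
The plan is to exhibit a single edge of $\tau$ whose estimate $Y_e$ lies, with overwhelming probability, below the rejection threshold $\degen/(2\eps)$. Since \textsc{Assignment} returns $\perp$ only when $Y_{e_{\min}} = \min_{e\in\tau} Y_e > \degen/(2\eps)$, producing one such edge already forces a non-$\perp$ output, so no union bound over the three edges is needed.

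First I would extract the good edge. Because $\tau$ is not $4\eps$-heavy, not all three of its edges are $4\eps$-heavy, so there is an edge $e^{*}\in\tau$ with $t_{e^{*}}\le \degen/(4\eps)$. Because $\tau$ is not $4\eps$-costly, no edge of $\tau$ is $4\eps$-costly, so in particular $d_{e^{*}}/t_{e^{*}}\le m\degen/(4\eps\nt)$. Multiplying these two bounds gives $d_{e^{*}}\le \frac{\degen}{4\eps}\cdot\frac{m\degen}{4\eps\nt}=\frac{m\degen^2}{16\eps^2\nt}<\frac{m\degen^2}{\eps^2\nt}$, so the condition on \Cref{line:costly} of \Cref{alg:IsAssigned} is not met for $e^{*}$, and $Y_{e^{*}}$ is genuinely estimated by sampling rather than set to $\infty$. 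I expect this to be the step demanding the most care: I must confirm that the very edge I rely on is not discarded as having too large a degree, and this is exactly what the not-heavy and not-costly hypotheses jointly guarantee.

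Next I would analyze the estimator $Y_{e^{*}}=\frac{d_{e^{*}}}{s}\sum_{j=1}^{s}Y_j$, where each $Y_j$ is the indicator that a uniform neighbor $w\in N(e^{*})$ closes a triangle, so $\Pr[Y_j=1]=t_{e^{*}}/d_{e^{*}}$ and $\EE[Y_{e^{*}}]=t_{e^{*}}\le \degen/(4\eps)$. Thus the rejection threshold $\degen/(2\eps)$ is at least twice the mean, and $\{Y_{e^{*}}>\degen/(2\eps)\}$ is an upper-tail event for the binomial sum $S=\sum_{j=1}^{s}Y_j$ of deviation at least a factor two above its mean $\mu=s\,t_{e^{*}}/d_{e^{*}}$. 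The crucial quantitative input is that $\mu$ is large: since $e^{*}$ is not $4\eps$-costly, $t_{e^{*}}/d_{e^{*}}\ge 4\eps\nt/(m\degen)$, and substituting $s=\frac{c\log n}{\eps^2}\cdot\frac{m\degen}{\nt}$ yields $\mu\ge \frac{4c\log n}{\eps}$.

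Finally I would invoke the multiplicative Chernoff bound (\Cref{thm:chernoff}, in its standard $\delta\ge 1$ upper-tail form $\Pr[S\ge(1+\delta)\mu]\le e^{-\delta\mu/3}$) with $\delta=1$. Since $\{Y_{e^{*}}>\degen/(2\eps)\}\subseteq\{S\ge 2\mu\}$, we obtain
\begin{align*}
\Pr\!\left[Y_{e^{*}}>\frac{\degen}{2\eps}\right]\le \Pr[S\ge 2\mu]\le \exp\!\left(-\frac{\mu}{3}\right)\le \exp\!\left(-\frac{4c\log n}{3\eps}\right)\le n^{-80},
\end{align*}
using $c>60$ and $\eps<1$, which is $o(1/n^5)$. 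As returning $\perp$ forces $Y_{e^{*}}>\degen/(2\eps)$, we conclude $\Pr[\textsc{Assignment}(\tau)=\perp]\le o(1/n^5)$, establishing the lemma.
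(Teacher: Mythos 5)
Your proof is correct and follows essentially the same route as the paper's: extract an edge $e^{*}$ of $\tau$ that is neither $4\eps$-heavy nor $4\eps$-costly, check that the degree test on \Cref{line:costly} does not fire for it, and apply a Chernoff bound to get $Y_{e^{*}}\le 2t_{e^{*}}\le \degen/(2\eps)$, which forces a non-$\perp$ return. You are in fact more careful than the paper on two minor points: you get the quantifiers right (not-heavy yields \emph{some} non-heavy edge while not-costly yields that \emph{every} edge is non-costly, which the paper states the other way around), and you correctly note that one edge below the threshold suffices, rather than claiming $\tau$ is assigned to $e^{*}$ itself.
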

\begin{proof}
Since $\tau$ is not $4\eps$-{\em heavy}, for each edge $e \in \tau$,
$t_e \leq \degen / (4\eps)$. Since $\tau$ is not $4\eps$-{\em costly}, at least one edge is not $4\eps$-{\em costly} --- 
let $e$ denote that edge. Then, $d_e/t_e \leq m\degen / (4\eps \nt)$. Together, they imply $d_e \leq m\degen^2/(16\eps^2 \nt)$, and the 
if condition on~\Cref{line:if} is not met.
We next show that, with high probability $Y_e < \degen / (2\eps)$.

By linearity of expectation, $\EE[Y_e] = t_e$. An application of 
Chernoff bound(~\Cref{thm:chernoff}), similar to the previous lemma,
shows that with probability at least $1-\frac{1}{n^5}$, $Y_e \leq 2t_e \leq \degen / (2\eps)$. Hence, with high probability, the 
triangle $\tau$ is assigned to the edge $e$, proving the lemma.
\end{proof}
We now have all the necessary ingredients to complete the proof
of the theorem.
\begin{proof}[Proof of~\Cref{thm:assgn}]
We have already argued how to ensure that \textsc{IsAssigned}
procedure satisfies item (1) in~\Cref{def:assign}.
\Cref{lem:assigned_triangle,lem:heavyEdge} proves that
\textsc{IsAssigned} satisfies item (2) and item (3) of ~\Cref{def:assign}. Finally, the space bound in item (4) is
enforced by the setting of the parameter $s$.
\end{proof}

\section{Lower Bound}

In this section we prove a multi-pass space lower bound
for the triangle counting problem. Our lower bound, stated below, is effectively optimal.
\begin{theorem} \label{thm:lb1} Any constant pass randomized streaming algorithm for graphs with $m$ edges, $T$ triangles, and degeneracy at most $\kappa$, that provides a constant factor approximation to $T$
with probability at least $2/3$, requires storage $\Omega(m\degen/T)$.
\end{theorem}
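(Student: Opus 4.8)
The plan is to prove this by reducing from set disjointness through the standard streaming-to-communication simulation, focusing on the regime $T = \Omega(\degen^2)$ in which the bound $m\degen/T$ genuinely improves on the prior $\Omega(\min(m/\sqrt{T}, m^{3/2}/T))$ lower bounds. Let $N := \Theta(m\degen/T)$ and consider a (unique) \DISJ instance $(x,y) \in \{0,1\}^N \times \{0,1\}^N$, under the promise that the intersection has size $0$ or exactly $1$. Given a $p$-pass, $s$-space algorithm that outputs a constant-factor approximation to $T$, Alice and Bob will encode $x$ and $y$ as disjoint edge sets of a common graph $G(x,y)$, stream Alice's edges followed by Bob's (a valid arbitrary order), and exchange the $O(s)$-bit working memory across each pass boundary. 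This yields a $2p$-message protocol of cost $O(ps)$; since the randomized communication complexity of \DISJ is $\Omega(N)$ irrespective of the number of rounds, I will conclude $s = \Omega(N/p) = \Omega(m\degen/T)$ for constant $p$.

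The heart of the argument is designing $G(x,y)$ so that its triangle count is a faithful proxy for the intersection size while the degeneracy stays $O(\degen)$. First I would fix a shared bipartite core: a complete bipartite graph $K_{\degen,\degen}$ on parts $S_A, S_B$ (which itself has no triangles), together with a fixed baseline gadget contributing exactly $T$ triangles at degeneracy $O(\degen)$. For each coordinate $i \in [N]$ I introduce a private set $A_i$ of $T/\degen^2$ apex vertices. If $x_i = 1$, Alice joins every $a \in A_i$ to all of $S_A$; if $y_i = 1$, Bob joins every $a \in A_i$ to all of $S_B$. A triangle $\{a, s_A, s_B\}$ with $a \in A_i$, $s_A \in S_A$, $s_B \in S_B$ then exists if and only if $x_i = y_i = 1$, and there are exactly $|A_i|\degen^2 = T$ of them per intersecting coordinate. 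Because the core is bipartite and the $A_i$ are pairwise non-adjacent, a coordinate with only one of its two bits set produces no triangle, so single-player contributions and cross-coordinate triangles both vanish.

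It then remains to verify three budgets. For triangles, the count equals $T + (\#\text{intersections})\cdot T$, which under the promise is $T$ (disjoint) versus $2T$ (one intersection); a factor-$(2-\Omega(1))$ approximation separates these, so the algorithm solves \DISJ. For edges, each active coordinate contributes $|A_i|\degen = T/\degen$ edges per side, for a total of $O(N\cdot T/\degen) = O(m)$. For degeneracy, every apex vertex has degree at most $2\degen$, so iteratively deleting apex vertices (each of degree $O(\degen)$) leaves only the core and baseline, both of degeneracy $O(\degen)$; hence every subgraph has a vertex of degree $O(\degen)$ and $\degen(G) = O(\degen)$, which a rescaling of the core to size $\degen/2$ makes exactly $\degen$.

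The main obstacle is precisely this degeneracy constraint inside the gadget. Packing $\Theta(T)$ triangles into a single coordinate naively forces a dense structure: a single vertex lying in $\Theta(T)$ triangles has a neighborhood spanning $\Theta(T)$ edges, already forcing degeneracy $\Omega(\sqrt{T}) \gg \degen$ once $T \gg \degen^2$. The construction sidesteps this by \emph{spreading} a coordinate's $T$ triangles across $T/\degen^2$ low-degree apex vertices, each carrying only the $\degen^2$ triangles that its $O(\degen)$-degree attachment to the shared core can support. Making the three requirements hold at once — $\Theta(T)$ triangles per intersection, zero triangles for half-active coordinates, and global degeneracy $O(\degen)$, all within an $O(m)$ edge budget — is the crux; once the gadget is in place, the simulation and the $\Omega(N)$ bound for \DISJ are routine. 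The complementary regime $T = O(\degen^2)$, where the per-coordinate edge budget is too small for this packing, I would treat separately, as there the prior $\Omega(m/\sqrt{T})$-type instances of degeneracy at most $\degen$ are the relevant obstruction.
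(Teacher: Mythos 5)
Your construction is essentially the paper's: a shared complete bipartite core on two sides of size $\Theta(\degen)$, per-coordinate blocks of $T/\degen^2$ apex vertices that Alice attaches to one side and Bob to the other, and a reduction from constant-pass set disjointness on $N = \Theta(m\degen/T)$ coordinates, with the same degeneracy, edge-count, and triangle-count accounting (the paper's $p=\degen$, $q=\degen^{r-2}$ is exactly your $|A_i|=T/\degen^2$). The only cosmetic differences are that the paper reduces from $\disj^{N}_{N/3}$ rather than unique disjointness and distinguishes $0$ triangles from $T$ triangles, so any constant-factor approximation suffices, whereas your baseline gadget turns the gap into $T$ versus $2T$ and hence only rules out approximation factors below $\sqrt{2}$ --- you may as well drop the baseline.
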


% In fact, we prove a more nuanced and arguably more general version of the lower bound. Although, we first lay out some details leading up to it.

Our proof strategy follows along the expected line of 
reduction from a suitable communication complexity problem.
We reduce from the much-studied \textsc{set-disjointness} problem in communication complexity. It is perhaps a 
canonical problem that has been used extensively to prove multi-pass lower bounds for various problems, including triangle counting~\cite{bera2017,Braverman2013}. 
We consider the following promise version of this problem.
Alice and Bob have two $N$-bit binary strings $x$ and $y$ respectively, each with exactly $R$ ones. They want to decide whether there exists an index $i\in [N]$ such that $x_i=1=y_i$. We denote this as the $\disj_{R}^{N}$ problem.

The basis of the reduction is the following lower bound for the
$\disj_{R}^{N}$ problem. Assume $\R(\disj_{R}^{N})$ denote the
randomized communication complexity for the $\disj_{R}^{N}$ problem.
\footnote{See~\cite{kushilevitz1997communication} for the definition of the notion {\em randomized communication complexity}.}
\begin{theorem}[Based on~\cite{kalyanasundaram1992,razborov1992}]
\label{thm:DISJ}
For all $R \le N/2$, we have $\R(\disj_{R}^{N}) = \Omega(R)$.
\end{theorem}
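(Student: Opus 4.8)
The plan is to derive the sparse bound from the classical linear lower bound for unrestricted set-disjointness of Kalyanasundaram--Schnitger and Razborov~\cite{kalyanasundaram1992,razborov1992}, via a cardinality-preserving embedding. Recall that those works prove that the standard disjointness function on universe $[n]$ (where Alice and Bob hold arbitrary subsets) requires $\Omega(n)$ randomized communication, even under the unique-intersection promise (YES-instances have disjoint supports, NO-instances meet in exactly one coordinate). Two features of their hard distribution are crucial for us: (i) it is supported on pairs $(x,y)$ whose supports have a \emph{fixed, known} cardinality $|x|=|y|=s$ with $s=\Theta(n)$ (concretely one may take $s=\lceil n/4\rceil$), and (ii) since any protocol correct on all inputs is in particular correct on promise inputs, the $\Omega(n)$ bound carries over to the unrestricted problem the excerpt defines. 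Thus, in the notation of the excerpt, $\R(\disj_{s}^{n})=\Omega(n)$ with $s=\lceil n/4\rceil$.

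First I would treat the main regime $R\le N/4$ by a trivial universe-padding reduction. Set $n=4R$, so that the fixed support size guaranteed above is $s=\lceil n/4\rceil=R$, giving $\R(\disj_{R}^{4R})=\Omega(R)$. Now map any instance of $\disj_{R}^{4R}$ to an instance of $\disj_{R}^{N}$ by keeping coordinates $1,\dots,4R$ and fixing coordinates $4R+1,\dots,N$ to $0$: both strings still carry exactly $R$ ones, and the coordinate-wise intersection is unchanged, so the padding coordinates never produce a common $1$. Hence any protocol for $\disj_{R}^{N}$ solves $\disj_{R}^{4R}$ at the same cost, and $\R(\disj_{R}^{N})\ge \R(\disj_{R}^{4R})=\Omega(R)$ for every $N\ge 4R$.

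To extend the range toward $R\le N/2$ I would use a \emph{private-padding} reduction that inflates the support size while preserving disjointness. Fix a small universe $[n']$ with $n'=\Theta(N-2R)$ carrying a Razborov-hard instance $(S,T)$ of support size $s'=\lceil n'/4\rceil\le R$, and allocate two disjoint padding blocks $P_A,P_B$, private to Alice and to Bob respectively, each of size $R-s'$. Alice plays $S\cup P_A$ and Bob plays $T\cup P_B$; both now have exactly $R$ ones, and since $P_A,P_B$ are disjoint from each other and from $[n']$ we get $(S\cup P_A)\cap(T\cup P_B)=S\cap T$, so disjointness (and the promise) is preserved exactly. The total universe has size $n'+2(R-s')=\Theta(n')+2R$, which can be made at most $N$ precisely because $R$ is a constant factor below $N/2$: for any constant $\delta>0$ and all $R\le(1/2-\delta)N$ one checks $N-2R=\Omega(R)$, so $\R(\disj_{R}^{N})\ge \R(\disj_{s'}^{n'})=\Omega(n')=\Omega(N-2R)=\Omega(R)$, with the hidden constant depending on $\delta$.

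The main obstacle, and essentially the only place needing care, is the interaction between the cardinality constraint and the disjointness structure: one cannot simply ``fill in'' extra ones, since arbitrary padding could manufacture spurious common coordinates and destroy the promise. The two features isolated above resolve this precisely — using hard instances of fixed, known support size makes the first embedding purely structural, and routing all extra ones through mutually disjoint private blocks guarantees $(S\cup P_A)\cap(T\cup P_B)=S\cap T$. A secondary subtlety is the exact endpoint $R=N/2$: there the two supports would have to partition $[N]$, so disjointness degenerates to testing $y=\bar x$ (complement-equality), which admits an $O(\log N)$ protocol; accordingly the $\Omega(R)$ bound is understood for $R$ up to a constant fraction of $N$ strictly below $1/2$, which is exactly the regime invoked in the reduction to triangle counting.
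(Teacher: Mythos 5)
Your derivation is correct, but note that the paper itself offers \emph{no} proof of this statement: \Cref{thm:DISJ} is imported wholesale from the communication-complexity literature, the citation to Kalyanasundaram--Schnitger and Razborov being the entire justification. What you supply is the folklore padding argument that such citations leave implicit, and you execute it correctly: Razborov's hard distribution does have fixed support size roughly $N/4$ under the unique-intersection promise, zero-padding the universe handles $R \le N/4$, and routing the surplus ones through mutually disjoint private blocks handles every $R \le (1/2-\delta)N$ while preserving both the exact cardinality constraint and the intersection, giving $\R(\disj_{R}^{N}) = \Omega(N-2R) = \Omega_{\delta}(R)$. Your endpoint observation is moreover a genuine (if minor) correction to the statement as written: at $R = N/2$ two $R$-sets are disjoint iff $y = \bar{x}$, and complement-equality admits an $O(\log N)$ randomized protocol, so the claimed $\Omega(R)$ bound cannot hold all the way up to $R = N/2$; it holds only for $R$ bounded by a constant fraction strictly below $N/2$. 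This weakening is harmless for the paper: the reduction in the proof of \Cref{thm:lower_bound} invokes the theorem only at $R = N/3$, squarely inside the regime your argument covers, where $N - 2R = N/3 = \Omega(R)$. In short, your proposal is not a different route from the paper's proof --- it is the proof the paper omitted, stated with the correct range of validity.
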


To prove our lower bound, we reduce the $\disj_{R}^{N}$ problem
to the following \textsc{triangle-detection} problem.
Consider two graph families $\mathcal{G}_1$ and $\mathcal{G}_2$;
$\mathcal{G}_1$ is a collection of triangle-free graphs on
$n$ vertices and $m$ edges with degeneracy $\degen$, and 
$\mathcal{G}_2$ consists of graphs on same number 
of vertices and edges, and with degeneracy $\Theta(\degen)$
and has at least $\nt$ many triangles. Given a graph $G\in 
\mathcal{G}_1 \cup \mathcal{G}_2$ as a streaming input, 
the goal of the \textsc{triangle-detection} problem is to decide
whether $G \in \mathcal{G}_1$ or $G \in \mathcal{G}_2$
with probability at least $2/3$, making a constant number
of passes over the input stream. A lower bound for the \textsc{triangle-detection} problem immediately gives a lower bound for the triangle counting problem.

To set the context for our lower bound result, it is helpful 
to compare against prior known lower bounds.
The multi-pass space complexity of the triangle counting
problem is $\Theta(\min\{m^{3/2}/\nt, m/\sqrt{\nt}\})$ \cite{bera2017,McGregor2016,cormode2014} for graphs
with $m$ edges and $\nt$ triangles. 
We first consider the first term $m^{3/2}/{\nt}$.
Since $\degen=O(\sqrt{m})$, our result
subsume the bound of $\Theta(m^{3/2}/\nt)$. 
Compared between $m\degen/\nt$
and $m/\sqrt{T}$, the former is smaller when $\nt > \degen^2$.
Necessarily, in our lower bound proofs, we will be dealing
with graph instances such that $\nt > \degen^2$.

For the purpose of proving a $\Omega(m\degen/{\nt})$ lower bound,
it is sufficient to show that the \textsc{Triangle-Detection} problem
requires $\Omega(m\degen/{\nt})$ bits of space for some
specific choice of parameters. However, we cover
the entire possible range of spectrum. Fix two
parameters $\degen$ and $r$, such that $r\geq 2$.
Then, we can construct an instance of the \textsc{Triangle-Detection} problem with degeneracy $\Theta(\degen)$ and $\nt = \degen^{r}$
such that solving it requires $\Omega(m\degen/{\nt})$ bits of space.
So in effect, we prove a more nuanced and arguably more general theorem than the one give in~\Cref{thm:lb1}. Formally, we show the following.

\begin{theorem}
\label{thm:lower_bound}
Let $\degen$ and $r$ be parameters such that $r\geq 2$.
Then there is a family of instances with {\em degeneracy}
$\Theta(\degen)$ and $\nt = \degen^{r}$ such that solving the 
\textsc{Triangle-Detection} problem requires $\Omega(m\degen/{\nt})$
bits of space.
\end{theorem}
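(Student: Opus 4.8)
The plan is to reduce $\disj_R^N$ to the \textsc{Triangle-Detection} problem and invoke Theorem~\ref{thm:DISJ}. I would build a graph $G_{x,y}$ whose edges split into a fixed part (known to both players) and two variable parts controlled by Alice's string $x$ and Bob's string $y$, arranged so that $G_{x,y}$ is triangle-free exactly when $x,y$ are disjoint and contains at least $\degen^r$ triangles as soon as they intersect. Given such a family, any $p$-pass, $s$-bit streaming tri-detector yields a two-party protocol of cost $O(p\cdot s)$: fix the stream order [fixed edges][Alice's edges][Bob's edges]; in each pass Alice feeds the fixed edges and her edges and ships the $s$-bit memory state to Bob, who feeds his edges and ships the state back for the next pass. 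For constant $p$ this forces $s=\Omega(R)$, and the whole construction will be tuned so that $R=\Theta(m\degen/\nt)$, which is exactly the claimed bound.

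For the gadget, the building block is a triangle-free core $H$ of degeneracy $\degen$ with exactly $\degen^r$ edges, realized as $\degen^{r-2}$ vertex-disjoint copies of $K_{\degen,\degen}$, each copy split into a $B$-side and a $C$-side (for non-integer $r$ one rounds the sizes). For any apex vertex $a$ joined to all of $H$, the triangles through $a$ are in bijection with $E(H)$, so $a$ witnesses exactly $\degen^r=\nt$ triangles. I gate this by the AND of one $x$-bit and one $y$-bit: Alice attaches $a$ to every $B$-side vertex iff her bit is $1$, Bob attaches $a$ to every $C$-side vertex iff his bit is $1$. A triangle $\{a,b,c\}$ with $bc\in E(H)$ then exists iff both bits are $1$, and since $B$, $C$, and the apex set are independent and the bicliques are vertex-disjoint, no other triangles can ever form; in particular a one-sided apex contributes nothing.

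The crucial tuning is how aggressively to \emph{share} cores. A private core per coordinate forces $m=\Theta(N\degen^r)$ and ruins the ratio, while a single globally shared core piles up to $R$ apexes on each core vertex and blows the degeneracy up to $\min(R,\degen^{r-1})$. I therefore group the $N$ coordinates into blocks of exactly $\degen$, with one shared core per block. This single choice does double duty: it amortizes the $\degen^r$ core edges over $\degen$ coordinates, so the fixed cost is $\degen^{r-1}$ per coordinate and (taking $N=2R$ with $R\ge\degen$) $m=\Theta(R\degen^{r-1})$, giving $m\degen/\nt=\Theta(R)$; and it caps the apex-neighbors of any core vertex at $\degen$, so every $B$- or $C$-side vertex has degree at most $2\degen$ in every subgraph, forcing degeneracy $\le 2\degen$, while the embedded $K_{\degen,\degen}$ keeps it $\ge\degen$. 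A ``no'' input (disjoint) activates no apex and is triangle-free; a ``yes'' input activates at least one apex and yields $\ge\degen^r$ triangles (exactly $\degen^r$ under the standard unique-intersection hard instances).

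The unbiasedness of the reduction and the communication simulation are routine. The one genuinely delicate point---and the step I would spend the most care on---is certifying degeneracy $\Theta(\degen)$ \emph{simultaneously} with $R=\Theta(m\degen/\nt)$, since these requirements pull in opposite directions: sharing lowers $m$ but concentrates degree on core vertices, and block size $\degen$ is the unique sweet spot meeting both. I would establish the degeneracy bound by the explicit degree argument above---every core vertex has degree $\le 2\degen$ and every apex lies in an independent set---rather than by any peeling computation, because this makes the bound hold for every subgraph at once and is completely insensitive to how many one-bit apexes happen to be present.
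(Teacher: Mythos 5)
Your proposal is correct and follows the same high-level strategy as the paper --- a reduction from set-disjointness in which a triangle-free bipartite ``core'' is fixed and apex vertices are AND-gated, with Alice attaching apexes to one side and Bob to the other, so that triangles appear exactly at intersecting coordinates --- but the gadget is laid out differently, in an essentially dual way. The paper uses a \emph{single} globally shared core $K_{p,p}$ with $p=\degen$ and gives each coordinate $i$ a block $V_i$ of $q=\degen^{r-2}$ apexes, so one intersection yields $p^2q=\degen^r$ triangles; the degeneracy is then certified by the vertex-ordering characterization (apex blocks first, then $A$, then $B$), since there the apexes are the low-degree vertices (degree at most $2\degen$) while core vertices have high degree. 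You instead put the multiplicity into the core --- $\degen^{r-2}$ disjoint copies of $K_{\degen,\degen}$ per core, one apex per coordinate, cores shared among blocks of $\degen$ coordinates --- which flips the degree profile: your apexes have degree up to $2\degen^{r-1}$ while every core vertex has degree at most $2\degen$, so you correctly certify degeneracy by noting that every subgraph contains either a core vertex (degree $\le 2\degen$) or only apexes (an independent set). Both layouts give $m=\Theta(N\degen^{r-1})$, $\nt=\degen^r$, degeneracy $\Theta(\degen)$, and hence the $\Omega(m\degen/\nt)=\Omega(N)$ bound via the $\disj$ lower bound; your block-sharing of cores is an extra knob the paper does not need (its global sharing already keeps apex degrees at $2\degen$), but it is what makes your version work given your choice of one apex per coordinate, and your explicit identification of the tension between sharing (to keep $m$ small) and degree concentration (to keep $\degen$ small) is exactly the right thing to worry about. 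All parameter checks ($m\degen/\nt=\Theta(R)$, $R\le N/2$, triangle-freeness of YES instances) go through.
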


\begin{proof}
We reduce from the $\DISJ^{N}_{N/3}$ problem. Let $(x,y)$ be 
the input instance for this problem. We then construct 
an input $G$ for the \textsc{triangle-detection} problem such that if $(x,y)$ is a YES instance, then $G \in \mathcal{G}_1$, and otherwise $G\in \mathcal{G}_2$. The graph
$G$ has a fixed part and a variable part that depends on $x$ and $y$. We next describe the construction of the graph $G$.

Let $G_{\fixed}=(A\cup B, E_{\fixed})$ be a complete bipartite
graph on the bi-partition $A$ and $B$. Then, $E_{\fixed} = \{\{a,b\}:a\in A,~b\in B\}$. Further assume $|A|=|B|=p$. We add $N$ blocks of vertices to $G_{\fixed}$ and denote them as $V_1,V_2,\ldots,V_N$. Assume $|V_i|=q$ for each $i\in [N]$.
We will set the parameters $p$ and $q$ later in the analysis.
For each index $i\in[N]$ such that $x_i=1$,
Alice connects each every vertex in $V_i$ to each vertex in 
$A$. Denote this edge set as $E_{A}$. For each index $i\in[N]$ such that $y_i=1$, Bob connects each every vertex in $V_i$ to each vertex in $B$. Denote this edge set as $E_B$. 
This completes the construction of the graph $G$.
To summarize, $G=(V,E)$ where $V= A\cup B\cup V_1\cup \ldots V_N$, and $E=E_{\fixed} \cup E_A \cup E_B$.

It is easy to see that the graph $G$ is triangle-free if and only if there does not exits any $i\in [N]$ such that $x_i=1=y_1$. We now analyze various parameters of $G$.
In both YES and NO case for the $\DISJ^{N}_{N/3}$ problem,
we have
\begin{align*}
    n &= |V| = 2p + Nq \,\\
    m &= |E| = p^2 + 2\cdot \frac{N}{3}\cdot pq \,.
\end{align*}
In the NO instance, the number of triangles $\nt$ is at least
$p^2q$. As argued above, in the YES instance, $\nt = 0$.
Finally, we compute the degeneracy $\degen$ in both the cases.
Note that $\degen(G_{\fixed}) = p$, and by definition (see~\Cref{def:degen}) $\degen(G) \geq p$.
We claim that $\degen=p$ in the YES instance and $\degen \leq 2p$
in the NO instance. To prove the claim, we use the following
characterization of degeneracy. Let $\prec$ be
a total ordering of the vertices and let $d^{\prec}_{v}$ denote the 
number of neighbors of $v$ that appears after $v$ according to the ordering
$\prec$. Let $d_{\max}^{\prec} = \max_{v\in V} d^{\prec}_{v}$. Then,
$\degen \leq d_{\max}^{\prec}$. 
Now consider the following ordering: $V_1 \prec V_2 \prec \ldots V_N \prec A \prec B$, and inside each set the vertices are ordered arbitrarily. Then, in the YES instance, $d_{\max}^{\prec} \leq p$
and in the NO instance $d_{\max}^{\prec} \leq 2p$, proving our claim.

We now set the parameters $p$ and $q$ as $p=\degen$ and $q=\degen^{r-2}$. Then, $m=\Theta(Npq)$ since $p=O(Nq)$. Assume there is a constant pass $o(m\degen/\nt)$-space streaming algorithm $\mathcal{A}$ for the \textsc{Triagnle-Detection} problem. Then, following standard reduction, $\mathcal{A}$ can be used to solve the $\DISJ_{N/3}^{N}$ problem with $o(Npq \cdot p / p^2q) = o(N)$ bits of communication, contradicting the lower bound for the $\DISJ_{N/3}^{N}$ problem. 
\end{proof}

\section{Future Directions}

In this paper, we studied the streaming complexity of the 
triangle counting problem in bounded degeneracy graphs. 
As we emphasized in the introduction, low degeneracy is an
often observed characteristics of real-world graphs. 
Designing streaming algorithms with better bounds 
on such graphs (compared to the worst case) is an important 
research direction. 
There have been some recent successes in this context
--- graph coloring~\cite{bera2019graph},
 matching size estimation~\cite{assadi2017estimating,esfandiari2018streaming,cormode2017sparse}, independent
set size approximation~\cite{cormode2017independent}.
It would be interesting to explore what other problems can admit better streaming algorithms in bounded degeneracy graphs. One natural candidate is 
the arbitrary fixed size subgraph counting problem, which asks for the
number of occurrences of the subgraph in the given input graph.

\medskip

\emph{Do there exist streaming algorithms for approximate subgraph counting on low degeneracy graphs that can beat known worst-case lower bounds?}

We conclude this exposition with the following conjecture about 
the fixed size clique-counting problem.

\begin{conjecture}
\label{conj:clique}
Consider a graph $G$ with degeneracy $\degen$
that has $\nt$ many $\ell$-cliques. 
There exists a constant pass streaming algorithm that outputs
a $(1\pm \eps)$-approximation to $\nt$ using $\tO(m\degen^{\ell-2}/\nt)$
bits of space.
\end{conjecture}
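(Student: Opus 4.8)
The plan is to lift the three-stage framework behind \Cref{thm:main} --- degree-biased edge sampling, neighbour sampling, and an assignment rule that tames variance --- from triangles ($\ell=3$) to general $\ell$-cliques, promoting ``the third vertex $w$'' to ``an $(\ell-2)$-tuple of vertices'' and ``triangles through $e$'' to ``$\ell$-cliques through $e$''. Write $t^{(\ell)}_e$ for the number of $\ell$-cliques containing $e$ whose remaining $\ell-2$ vertices lie in $N(e)$, so that $\sum_e t^{(\ell)}_e = \binom{\ell}{2}\nt$ as before. Since the target space $\tO(m\degen^{\ell-2}/\nt)$ is again (universe size)$/\nt$, the whole approach rests on the universe --- now all $\ell$-cliques --- having size $O(m\degen^{\ell-2})$.

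First I would establish this clique analogue of \Cref{lem:chiba}: orienting edges along a degeneracy ordering $\order$ caps every out-degree by $\degen$, and each $\ell$-clique is fixed by its $\order$-minimum vertex $v$ together with an $(\ell-1)$-subset of its out-neighbourhood, whence the count is at most $\sum_v \binom{d^+_v}{\ell-1} \le \degen^{\ell-2}\sum_v d^+_v = O(m\degen^{\ell-2})$. Second I would generalise \Cref{alg:IsAssigned}: assign each clique to its participating edge of smallest $t^{(\ell)}_e$, call $e$ $\eps$-\emph{heavy} when $t^{(\ell)}_e > \degen^{\ell-2}/\eps$ and $\eps$-\emph{costly} when $\binom{d_e}{\ell-2}/t^{(\ell)}_e > m\degen^{\ell-2}/(\eps\nt)$, and reprove the clique version of \Cref{lem:heavy_costly_triangle} (heavy and costly cliques form an $O(\eps)$-fraction of $\nt$) using the new clique bound in place of \Cref{lem:chiba}; this forces the maximum per-edge assignment down to $O(\degen^{\ell-2})$. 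With these parameters the entire outer analysis of \Cref{alg:triangle_estimate} carries over essentially verbatim, every $\degen$ becoming $\degen^{\ell-2}$, because that analysis only ever uses $\total\le\nt$, $\assign_{\max}=O(\degen^{\ell-2})$, and the per-call space budget. Third I would replace the single sampled neighbour by $\ell-2$ independent samples $w_1,\dots,w_{\ell-2}$ from $N(e)$ and use the estimator $\binom{d_e}{\ell-2}\cdot\mathbf{1}[\{e\}\cup\{w_1,\dots,w_{\ell-2}\}\text{ is an }\ell\text{-clique assigned to }e]$, verifying the clique and its assignment in one extra pass with $O(\ell^2)$ memory.

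The hard part is precisely this neighbour-sampling step, where $\ell>3$ genuinely departs from the triangle case rather than merely inflating exponents. For a fixed edge the chance that $\ell-2$ random neighbours complete a clique is $\approx t^{(\ell)}_e/\binom{d_e}{\ell-2}$, so the effective sampling universe has size $\sum_e \binom{d_e}{\ell-2}$ --- and, unlike $\sum_e d_e=O(m\degen)$, this is \emph{not} $O(m\degen^{\ell-2})$: one edge of degree $\Theta(\sqrt m)$ already contributes $\approx m^{(\ell-2)/2}$, which overwhelms $m\degen^{\ell-2}$ once $\degen$ is small and $\ell\ge 5$. Thus degree-biased sampling over-weights high-degree edges, and the scaled indicator has variance too large to concentrate within $\tO(m\degen^{\ell-2}/\nt)$ samples. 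I expect the fix to be confining the extra $\ell-2$ vertices to an $(\ell-2)$-clique found \emph{recursively} inside $N(e)$, where the same out-degree ordering re-caps the branching by $\degen$ and collapses the universe back to $O(m\degen^{\ell-2})$; arranging this recursion within a constant number of passes, while pruning costly edges so that variance telescopes through the $\ell-2$ levels without an extra $\poly(\degen)$ blow-up per level, is where I expect the genuine difficulty, and is presumably why the statement remains a conjecture.
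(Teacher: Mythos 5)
The first thing to say is that the paper does not prove this statement: it is posed in the concluding section as an open conjecture (\Cref{conj:clique}), so there is no proof of record to compare yours against. Your proposal must therefore stand on its own, and on those terms it is a research plan with an explicitly acknowledged gap rather than a proof. The parts you do carry out are sound: the clique analogue of \Cref{lem:chiba} (the number of $\ell$-cliques is $O(m\degen^{\ell-2})$, via orienting along a degeneracy ordering and charging each clique to its minimum vertex together with an $(\ell-1)$-subset of its out-neighbourhood) is correct and is in fact already in Chiba--Nishizeki; and the generalization of the heavy/costly machinery of \Cref{alg:IsAssigned} with threshold $\degen^{\ell-2}/\eps$ is the natural one and would plausibly go through once the clique-count bound replaces \Cref{lem:chiba}.

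The genuine gap is exactly where you place it, and your diagnosis of why it is a gap is correct. Sampling $\ell-2$ independent neighbours of a degree-biased edge $e$ succeeds with probability roughly $t^{(\ell)}_e/d_e^{\ell-2}$, so the estimator's scale is governed by $\sum_e d_e^{\ell-2}$, which is not controlled by $m\degen^{\ell-2}$: already in a degeneracy-$2$ graph consisting of $K_{2,n}$ plus the edge joining its two high-degree vertices, that single edge contributes $\Theta(n^{\ell-2})$ against a target of $O(n)$. Your proposed repair --- locating the remaining $\ell-2$ vertices by recursively finding an $(\ell-2)$-clique inside $N(e)$ so that the out-degree cap of $\degen$ re-limits the branching at each level --- is the right instinct and mirrors the sublinear-time clique counters of Eden, Ron and Seshadhri that the paper ports ideas from, but you do not execute it: you do not specify how the recursion is realised in a constant number of passes, how the assignment rule composes across the $\ell-2$ levels, or how the variance is prevented from accumulating a $\poly(\degen)$ factor per level. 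Since that unexecuted step is precisely the content of the conjecture, your proposal does not resolve it; the statement remains open, as you yourself conclude.
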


\begin{acks}
The authors would like to thank the anonymous reviewers for
their valuable feedback. The authors are supported by NSF TRIPODS grant CCF-1740850, NSF CCF-1813165, CCF-1909790, and ARO Award W911NF1910294.
\end{acks}

%%
%% The next two lines define the bibliography style to be used, and
%% the bibliography file.
\bibliographystyle{ACM-Reference-Format}
\bibliography{source/refs,source/sesh}

%%
%% If your work has an appendix, this is the place to put it.
% \appendix

\end{document}